\providecommand*{\ifempty}[3]{\ifthenelse{\isempty{#1}}{#2}{#3}}
\newcommand{\eg}{\textit{e.g.}}
\newcommand{\ie}{\textit{i.e.}}
\newcommand{\cf}{\textit{cf.}}
\newcommand{\e}{\varepsilon}
\newcommand{\M}{\mathcal{M}}
\newcommand{\FF}{\mathfrak{F}}
\newcommand{\PP}{\mathfrak{P}}
\newcommand{\X}{\mathcal{X}}
\newcommand{\lol}{\multimap}
\newcommand{\lollol}{\multimapboth}
\newcommand{\ot}{\oplus}
\newcommand{\naturals}{\mathbb{N}}
\newcommand{\reals}{\mathbb{R}}
\newcommand{\preals}{[0,\infty)}
\newcommand{\extreals}{[0,\infty]}
\newcommand{\Prop}[1][]{\X_{#1}}
\newcommand{\al}{\mathbb{AL}}
\newcommand{\pl}{\mathbb{PL}}
\newcommand{\coloneq}{\mathrel{\mathop:}=}
\newcommand{\coloneqq}{\mathrel{\mathop{::}}=}
\newcommand{\dotdiv}{\mathbin{\text{\@dotminus}}}
\newcommand{\sem}[1]{\llbracket #1 \rrbracket}
\newcommand{\@dotminus}{%
  \ooalign{\hidewidth\raise1ex\hbox{.}\hidewidth\cr$\m@th-$\cr}%
}
\newcommand{\infrule}[3][]{\infer[{\ifempty{#1}{}{(\textsc{#1})\;}}]{#3}{#2}}
\newcommand{\doubleinfrule}[3][]{\infer=[{\ifempty{#1}{}{(\textsc{#1})\;}}]{#3}{#2}}
\begin{document}
\title{Polynomial Lawvere Logic}
%
%
\author{Bacci Giorgio\inst{1} \and
Radu Mardare\inst{2} \and
Prakash Panangaden\inst{3} \and
Gordon Plotkin\inst{4}}
\authorrunning{G. Bacci et al.}
%
\institute{Dept.\ of Computer Science, Aalborg University, Denmark \and
School of Mathematical and Computer Sciences, Heriot-Watt University, Scotland \and
School of Computer Science, McGill University, Canada \and 
LFCS, School of Informatics, University of Edinburgh, Scotland }
\maketitle              
\begin{abstract}
We study Polynomial Lawvere logic ($\pl$), a logic defined 
over the Lawvere quantale of extended positive reals with sum as tensor, to which we add multiplication, thereby obtaining a semiring structure. $\pl$ is designed for complex quantitative reasoning, allowing judgements that express inequalities between polynomials on the extended positive reals. We introduce a deduction system and demonstrate its expressiveness by deriving a classical result from probability theory relating the Kantorovich and the total variation distances. Although the deductive system is not complete in general, we achieve completeness for finitely axiomatizable theories. The proof of completeness relies on the Krivine-Stengle Positivstellensatz (a variant of Hilbert's Nullstellensatz).

Additionally, we provide new complexity results, both for $\pl$ and its affine fragment $\al$, regarding two decision problems: satisfiability of a set of judgements and semantical consequence from a set of judgements. The former is NP-complete in $\al$ and in PSPACE for $\pl$; the latter is co-NP complete in $\al$ and in PSPACE for $\pl$.

\end{abstract}
\section{Introduction}
Recent developments in theoretical computer science have questioned the relevance of equality in semantics, advocating for more nuanced, quantitative approaches to equivalence. For instance, exact equality is often too rigid for probabilistic systems where small changes can disrupt equivalence between processes. To address this, researchers started to use metrics to measure differences, thus shifting the focus from strict equivalence to a quantitative comparison. This metric-based reasoning 
has also been applied to areas like privacy, security~\cite{AmorimGHKC17,ReedP10}, and computational resource analysis~\cite{DalLagoG21,LagoG22a}, and symbolic computation~\cite{GavazzoF23}.

As a result, theories of semantic equality have evolved into quantitative frameworks, focusing on measuring differences rather than asserting equality. Notable examples include theories for  program analyis~\cite{ArnoldN80,CrubilleL15,CrubilleL17,LagoGY19,LagoG21,LagoG22a}, distances for processes~\cite{Desharnais99b,Desharnais04,Ferns04,Ferns05,BacciBLM19,BacciBLMTB19,BaldanBKK14}, and quantitative equational logics over algebras of terms~\cite{Mardare16,Mardare17,Bacci18,Mardare21,Bacci21,MioSV21,MioSV22,Adamek22,AdamekDV23}. The latter, in particular, focuses on providing foundations for quantitative reasoning. The basic idea is to replace 
traditional equations $s = t$ between terms $s,t$ of an algebra with \emph{quantitative equations} of the form $s =_\e t$, expressing that $s$ and $t$ are at most $\e$ apart, for some real  $\e \geq 0$. Thus, quantitative algebraic theories are used to reason about the \emph{distances} between elements of an algebra.
However, equational logic is only one of many forms of logic and the question about how extensions of classical logic can be used to provide foundations for such quantitative reasoning remained open. 

In his seminal work~\cite{Lawvere73}, Lawvere proposed the idea that generalized metric spaces could be viewed as categories enriched over the quantale $\extreals$, the complete lattice of extended positive reals with the order reversed and sum as tensor. This naturally leads to considering logics where truth values are derived from the Lawvere quantale. In such a framework, a quantitative equation $s =_\e t$ is expressed as a judgement $\e \vdash s = t$, which corresponds to the inequality $\e \geq {}$``$s = t$''. Here, $s = t$ is a predicate valued in $\extreals$ (with Lawvere's generalized metric spaces being simply $\extreals$-valued preorders).

Building on Lawvere's concept, in~\cite{Bacci2023} we began exploring a class of quantitative logics, which we refer to as \textit{Lawvere logics}%
\footnote{The logics are named in honor of Lawvere.}. 
Among them, Affine Lawvere propositional logic ($\al$) was the most expressive. This logic features a tensor operation $\ot$ interpreted as addition (in $\extreals$), a linear implication $\lol$ (acting as the adjoint residuum of $\ot$), constants $r$ for all non-negative real numbers, and scalar multiplication by non-negative reals, allowing it to express all affine functions on $\extreals$. Logical conjunction and disjunction are derived operators.
Judgements in $\al$ are interpreted as affine inequalities on $\extreals$. A key innovation of \cite{Bacci2023} was the use of 
linear algebra results, specifically Farkas' Lemma~\cite{Farkas1902} and Motzkin's transposition theorem~\cite{Motzkin51}, to establish  completeness. This provided a significant connection between logical reasoning and classical results from arithmetic.
However, many real-world quantitative phenomena involve non-linear interactions, making it desirable to express polynomial inequalities. 


In this paper, we take on the challenge of developing \emph{Polynomial Lawvere Logic} ($\pl$). This logic extends $\al$
by adding multiplication as a new logical connective,  enabling the encoding of polynomial functions on $\extreals$, 
with judgements representing polynomial inequalities.
Our approach builds on Lawvere's idea by giving logical status to both sum (the tensor) and multiplication over $[0,\infty]$, with the key innovation being that the truth values now come from a semiring structure involving the quantale. 

Our \emph{main contributions} are:
\begin{enumerate}[noitemsep,wide,topsep=0pt]
\item 
We propose a deduction system for PL (Table~\ref{tab:rules}) and demonstrate its expressiveness by (a) deriving a classical result from probability theory relating the Kantorovich and the total variation distances and (b) giving an embedding of quantitative equational logic in $\pl$ (\cf\ Section~\ref{sec:applications}).

\item We prove (Theorem~\ref{th:incompleteness}) $\pl$ is incomplete in general, but is (Theorem~\ref{thm:completeness}) complete for finitely axiomatizable theories. The core of the completeness proof   differs significantly from that in~\cite{Bacci2023}. Rather than using linear algebra, we use Krivine-Stengle's Positivstellensatz~\cite{Stengle74} (a variant of Hilbert's Nullstellensatz), promoting  further the connection between arithmetic and logical reasoning. These  theorems are also used in algebraic proof complexity, see, \textit{e.g.}, \cite{Beame96,GrigorievV01,Pit16}.

\item Unlike $\al$, $\pl$ allows the ``Booleanization'' formulas and judgements, which we use to prove a deduction theorem (Theorem~\ref{th:deductionThm}) that was not available in $\al$. 

\item The completeness proof employs a polynomial-time non-de\-ter\-min\-is\-tic reduction (in the sense of Adleman and Manders~\cite{AdlemanM79}) that translates any $\pl$ inference to a set of inferences in a specific normal form. The reduction presented in this paper can be instantiated to fragments of $\pl$, such as $\al$. Notably, when instantiated to $\al$, it both generalises and significantly simplifies the normalisation algorithm proposed in~\cite{Bacci2023}.
We speculate that this technique can be a valid tool to obtain, and/or simplify, other completeness proofs. 

\item Relying on the reduction discussed above, we establish new complexity results for two fundamental decision problems: satisfiability of a set of judgements and semantical consequence from a set of judgements. This is done for both $\al$ and $\pl$. We prove that deciding satisfiability is NP-complete in $\al$ and in PSPACE for $\pl$ (Theorems~\ref{thm:SatALNP-compl}--\ref{SATpl-PSPACE}); and that deciding semantical consequence is co-NP complete in $\al$ and in PSPACE for $\pl$ (Theorems~\ref{thm:cosequenceAPisCoNP}--\ref{SCpl-PSPACE}).
\end{enumerate}

\textbf{Related Work.}
Parallel to Lawvere's real-valued approach, we must mention the works on graded linear logics~\cite{DagninoP22,LemayV23} and normed logics~\cite{Grandis2007}, and the vast development of fuzzy logics~\cite{HajekBook,P79}. Among the latter, product logic~\cite{H96,H06,Sc06,EG00}, defined over the multiplicative quantale $[0,1]$, is the closest to ours. Through the quantale isomorphism $e^{-x}$, $\pl$ can be viewed as an extension of product logic: tensor corresponds to product $a \cdot b$, and multiplication to the operation $a \odot b \coloneq a^{-\ln b}$. 
However, such an interpretation of the logical connectives seems unnatural for quantitative reasoning, and impedes us from direct access to results we make use of, \textit{e.g.}, in algebraic geometry (such as the Krivine-Stengle Positivstellensatz, used for completeness)  and in linear algebra (such as  Khachiyan's ellipsoid method, used for complexity).

\textbf{Synopsis.} Section~\ref{sec:prelim} gives  preliminary definitions and notation. 
Section~\ref{sec:pl} gives the syntax and semantics of $\pl$,
and Section~\ref{sec:deductionsystem} presents a deduction system for it. 
Section~\ref{sec:applications} presents some nontrivial applications. 
Section~\ref{sec:canonicalforms} discusses canonical forms for $\pl$'s judgements, and 
Section~\ref{sec:completeness} develops the completeness result. 
Section~\ref{sec:complexity} gives the complexity results for $\pl$ and its affine fragment $\al$. 
Finally, Section~\ref{sec:concl} gives concluding remarks and discusses future work. (Appendix~\ref{appendix} gives some
detailed proofs.)

\section{Preliminaries and Notation} \label{sec:prelim}

A \emph{quantale} is a complete lattice with a binary, associative operation $\ot$ (\emph{tensor}), such that for every element $a$, both $a \ot -$ and $- \ot a$ have right adjoints (equivalently, $\ot$ preserves all joins). 
A quantale is \emph{commutative} whenever its tensor is; and \emph{unital} if there
is an element $u$ (\textit{unit}) s.t. $u \ot a = a = a \ot u$, for all $a$; when the unit is the top element, the quantale is \emph{integral}.
For commutative quantales we denote the right adjoint to $a \ot -$ by $a \lol -$. Hence, $a \ot b \leq c  \Longleftrightarrow  b \leq a \lol c$.

Commutative integral quantales are 
(i) the \textit{Boolean quantale} 
with logical conjunction as tensor; 
(ii) the \textit{{\L}ukasiewicz quantale} on $[0,1]$ 
with truncated addition as tensor; and (iii) the \emph{Lawvere quantale} over the complete lattice of the extended positive reals $\extreals$ taken with reverse order 
and with sum as tensor. The first two have been used 
for interpreting Boolean and {\L}ukasiewicz  logics~\cite{Lukasiewicz30}; the third one to interpret Lawvere logics~\cite{Bacci2023}.

In this paper, we work with sum $+$ and multiplication $\cdot$ on $\extreals$ as defined in Table~\ref{tab:operation}. They form a semiring, i.e., two commutative monoids with $0$ and $1$ as their identity elements, with multiplication distributing over the sum, and $0 \cdot x = 0$. The sum is the sum of the Lawvere quantale. With the inverse order on the positive reals, multiplication does not preserve all joins, although it does w.r.t.\ to the usual order on the reals. We could have made other choices. If we instead put \mbox{$0 \cdot \infty = \infty$} we would obtain a multiplication that preserves all joins w.r.t.\  the inverse order, but we would not obtain a semiring. If we worked with the usual order on the reals we would obtain a semiring in the category of complete lattices, but we would no longer be working with Lawvere's quantale.   So, while no choice is perfect, ours maintains a connection with Lawvere's quantale and we do have a semiring. Having a semiring is particularly useful as it enables us to encode examples from measure theory (Section~\ref{sec:applications}) and to obtain a deduction theorem (Theorem~\ref{th:deductionThm}) (which we would otherwise lack).

\begin{table*}[tb!]
\begin{align*}
    \begin{tabular}{c|ccccc}
		$+$ & $0$ & $s$ & $\infty$ \\ 
		\hline 
		$0$ &  $0$ & $s$ & $\infty$ \\ 
		$r$ &  $r$ & $~~r+s~~$ & $\infty$ \\ 
		$\infty$ & $\infty$ & $\infty$ & $\infty$
	\end{tabular}
    &&
	\begin{tabular}{c|ccccc}
		$\cdot$ & $0$ & $s$ & $\infty$ \\ 
		\hline 
		$0$ &  $0$ & $0$ & $0$ \\ 
		$r$ &  $0$ & $~~r \cdot s~~$ & $\infty$ \\ 
		$\infty$ & $0$ & $\infty$ & $\infty$
	\end{tabular}
    &&
    \begin{tabular}{c|ccccc}
		$\dotdiv$ & $0$ & $s$ & $\infty$ \\ 
		\hline 
		$0$ &  $0$ & $0$ & $0$ \\ 
		$r$ &  $r$ & $~~r - s~~$ & $0$ \\ 
		$\infty$ & $\infty$ & $\infty$ & $0$
	\end{tabular} 
\end{align*}

\caption{Tabular definition for sum, multiplication, and truncated subtraction on $\extreals$ (first argument in the leftmost column; second in the top row; $r,s\in(0,\infty)$).}
\label{tab:operation}
\end{table*} 

We use $\leq$, $\geq$, $\inf$, $\sup$ with the standard ordering of the extended reals to avoid confusion. However, being the reversed order of the Lawvere quantale, join and meet correspond to $\inf$ and $\sup$, respectively, with $\infty$ as the bottom element and $0$ as the top. The right adjoint $s \lol r$ is just the inverted truncated subtraction $r \dotdiv s$ defined in Table~\ref{tab:operation}.


\section{Polynomial Lawvere Logic}
\label{sec:pl}

In this section, we introduce \emph{Polynomial Lawvere logic} ($\pl$), a propositional logic interpreted over the  semiring. It extends Affine Lawvere logic ($\al$)~\cite{Bacci2023} by allowing the multiplication of formulas, thereby enabling the encoding of polynomial functions over the positive extended reals. As a result, $\pl$ offers a more expressive logic, capable of handling more complex quantitative reasoning.

\textbf{Syntax.} Let $\Prop$ be a set of propositional variables. 
The formulas of $\pl$ are freely generated by the following grammar.
\begin{align*}
    \phi,\psi ::= \bot \mid x \mid r \mid
    \phi \ot \psi \mid \phi \lol \psi \mid \phi \psi
    && \text{(for $x\in\X$ and  $r \in \preals$)}
\end{align*}

For $n \in \naturals$, we define $\phi^n$ inductively as
$\phi^0 \coloneq 1$ and $\phi^{(n+1)} \coloneq \phi \phi^n$.
Moreover, from the quantale connectives, one can derive the usual logical connectives
\begin{align*}
\begin{aligned}
	&\top \coloneq \bot \lol \bot \,,\\
	&\lnot\phi \coloneq \phi\lol\bot \,,
\end{aligned}
&&
\begin{aligned}
    &\phi \land \psi \coloneq \phi \ot (\phi \lol \psi) \,, \\
	&\phi \lor \psi \coloneq ((\psi \lol \phi) \lol \phi) \land ((\phi \lol \psi) \lol \psi) \,, \\
	&\phi \lollol \psi \coloneq (\phi \lol \psi) \land (\psi \lol \phi) \,.
\end{aligned}
\end{align*}

Let $\phi[\psi/x]$ be the result of substituting the variable $x\in \X$ for the formula $\psi$ in $\phi$. This notation extends canonically to lists and sets of formulas. 

We assume all binary operators are left-associative and follow an \textbf{operator precedence rule}: multiplication has the highest precedence, followed by $\lnot$, then $\ot$, next $\land$ and $\lor$, and finally $\lol$ and $\lollol$ with the lowest precedence.
 Thus, the formula $\theta\phi\ot\psi\land \lnot\theta\psi\lol\theta$ is interpreted as $(((\theta\phi)\ot\psi)\land (\lnot(\theta\psi)))\lol\theta$

\textbf{Semantics.} The models are maps $\M \colon \Prop \to \extreals$ interpreting the 
propositional variables in our semiring, extended to all formulas as follows 
\begin{align*}
	\begin{aligned}
		\M(\bot) &\coloneq \infty \,, \\
		\M(r) &\coloneq r \,, \\
		\M(\phi\psi) &\coloneq \M(\phi)\M(\psi)\,,
	\end{aligned}
	&&
	\begin{aligned}
		\M(\phi \ot \psi) &\coloneq \M(\phi) + \M(\psi) \,, \\
		\M(\phi \lol \psi) &\coloneq \M(\psi) \dotdiv \M(\phi) \,, 
	\end{aligned}
\end{align*}
Consequently, the derived connectives are interpreted as follows.
\begin{align*}
\begin{aligned}
  \M(\top) 			&= 0 \,, \\
  \M(\lnot\phi) 		&= \infty - \M(\phi) \,, 
\end{aligned}
  &&
\begin{aligned}
  \M(\phi \land \psi)	&= \max \{ \M(\psi), \M(\phi) \} \,, \\
  \M(\phi \lor \psi)	&= \min \{ \M(\psi), \M(\phi) \}  \,, \\
  \M(\phi \lollol \psi) &= |\M(\phi) - \M(\psi)| \,.
\end{aligned}
\end{align*}

\textbf{Affine Lawvere Logic} ($\al$), introduced in \cite{Bacci2023}, follows the grammar below and shares the same semantics as $\pl$, appropriately restricted.
\begin{align*}
	\phi,\psi ::= \bot \mid x \mid r \mid
	\phi \ot \psi \mid \phi \lol \psi \mid r \psi
	&& \text{(for $x\in\X$ and  $r \in \preals$)}
\end{align*}

\textbf{Boolean formulas.} Observe that while, in general, a model evaluates a formula to any value in $[0,\infty]$, formulas such as $\lnot\phi$ or $\phi\bot$ can either be evaluated to $0$ (``true'') or to $\infty$ (``false''). For instance,
\begin{align*}
    \begin{aligned}
    \M(\lnot\phi) = 
		\begin{cases}
			0		&\text{if $\M(\phi)$ is infinite} \\
			\infty	&\text{otherwise} \,,
		\end{cases}  
    \end{aligned}
    &&
    \begin{aligned}
    	\M(\lnot\lnot\phi) = 
    	\begin{cases}
    		0		&\text{if $\M(\phi)$ is finite} \\
    		\infty	&\text{otherwise} \,.
    	\end{cases}  
    \end{aligned}
\end{align*}
We call such formulas \emph{Boolean}, and they define more useful derived operators. 
\begin{align*}
\begin{aligned}
	&\phi=\psi \coloneq (\phi\lollol\psi)\bot \,, \\
	&\phi \neq \psi \coloneq \lnot(\phi\lollol\psi)\bot \,,
\end{aligned}
&&
\begin{aligned}
    &\phi\geq\psi \coloneq (\phi\lol\psi)\bot \,, \\
	&\phi>\psi \coloneq \lnot(\psi\lol\phi)\bot \,,
\end{aligned}
&&
\begin{aligned}
    &|\phi| \coloneq \bot > \phi 
    \,.
\end{aligned}
\end{align*}
These have the expected ``Boolean'' meaning:
\begin{align*}
    \begin{aligned}
    \M(\phi = \psi) = 
		\begin{cases}
			0		&\text{if $\M(\phi)=\M(\psi)$} \\
			\infty	&\text{otherwise} \,,
		\end{cases}
    \\
    \M(\phi \neq \psi) = 
		\begin{cases}
			0		&\text{if $\M(\phi)\neq\M(\psi)$} \\
			\infty	&\text{otherwise} \,,
		\end{cases}
    \end{aligned}
    &&
    \begin{aligned}
    \M(\phi > \psi) = 
		\begin{cases}
			0		&\text{if $\M(\phi) > \M(\psi)$} \\
			\infty 	&\text{otherwise} \,,
		\end{cases}
    \\
    \M(\phi \geq \psi) = 
		\begin{cases}
			0		&\text{if $\M(\phi) \geq \M(\psi)$} \\
			\infty 	&\text{otherwise} \,.
		\end{cases}
    \end{aligned}
\end{align*}
Using them, we can express useful facts about our models, \eg, $|\phi|$ says that ``\textit{$\phi$ is finite}'' and $\phi > 0$ that ``\textit{$\phi$ is strictly positive}''. 

Hereafter, we use $\phi\leq\psi$ and $\phi<\psi$ as synonyms for $\psi\geq\phi$ and $\psi>\phi$ respectively. Also, concatenation of these formulas are used to express their conjunction: for example, $\phi\leq\psi<\theta$ means $(\phi\leq\psi)\land(\psi<\rho)$.

\textbf{Judgements.} A \emph{judgement} in $\pl$ is a syntactic construct of the form
\begin{equation*}
	\phi_1, \dots, \phi_n \vdash \psi \,,
	\tag{Judgement}
\end{equation*}
where $\phi_i$, $\psi$ are logical formulas. The antecedent $\phi_1, \dots, \phi_n$ of a judgement is a finite ordered list, possibly, with repetitions. 
As customary, for $\Gamma$ and $\Delta$ lists of formulas, their comma-separated juxtaposition $\Gamma, \Delta$ denotes concatenation; and $\vdash \phi$ is the notation for a judgement with empty list of antecedents.

A judgement $\Gamma \vdash \psi$ \emph{is satisfied by} a model $\M$, denoted $\Gamma \models_\M \psi$, whenever
\begin{equation*}
	\textstyle
	\sum_{\phi \in \Gamma} \M(\phi) \geq \M(\psi) \,.
	\tag{Semantics of judgements}
\end{equation*}
A judgement is \emph{satisfiable} if it is satisfied by a model; \emph{unsatisfiable} if it is not satisfiable; and a \emph{tautology} if it is satisfied by all models.

In particular, $\vdash \phi \lol \phi$, $\vdash \top$, and 
$\vdash \lnot\lnot \phi \lollol (\bot > \phi)$ are examples of tautologies, while $\vdash \phi \lollol (\lnot\lnot\phi)$
is not.

Note the distinction between $\phi\lol\psi$ and the Boolean formula $\phi\geq\psi$: while for all models $\M$, we have $\models_\M\phi\lol\psi$ iff $\models_\M\phi\geq\psi$, it may not hold that  $\M(\phi\lol\psi)=\M(\phi\geq\psi)$, as $\M(\phi\lol\psi)$ could be a non-zero finite number.

\begin{definition}[Semantic Consequence]
A judgement $\gamma$ is a \emph{semantic consequence} of a set $S$ of judgements, in symbols $S \models \gamma$, 
if every model that satisfies all the judgements in $S$ also satisfies $\gamma$.
\end{definition}


\section{Deduction System for $\pl$}
\label{sec:deductionsystem}

An \textit{inference (rule)} is a syntactic construct of the form 
\begin{equation*}
\infrule{\; S \;}{\gamma}
\end{equation*}
for $S$ a set of judgements and $\gamma$ a judgement. 
The judgements in $S$ are the \emph{hypotheses of the inference} and 
$\gamma$ is the \emph{conclusion}. When $S=\{\gamma'\}$ is a singleton, we write 
\begin{equation*}
\begin{aligned}
\doubleinfrule{\gamma'}{~\gamma~}
\end{aligned}
\quad \text{to denote both} \quad
\begin{aligned}
\infrule{\gamma'}{~\gamma~}
\end{aligned}
\;
\text{and }
\;
\begin{aligned}
\infrule{\gamma}{~\gamma'~}
\end{aligned}
\end{equation*}
and say that $\gamma$ is \emph{provably equivalent} to $\gamma'$.

\begin{table*}[tb!]	
	\begin{align*}
		\begin{array}{c}
			\begin{aligned}
				\infrule[id]{}{\phi \vdash \phi}
				&&&
				\infrule[top]{}{ \phi\vdash \top}
				&&&
				\infrule[bot]{}{\bot \vdash \phi}
			\end{aligned}
			\\[2ex]
			\begin{aligned}
				\infrule[cut]{
					\Gamma \vdash \phi
					&
					\Delta,\phi \vdash \psi
				}{ \Gamma,\Delta \vdash \psi }
				&&&
				\infrule[perm]{
					\Gamma, \phi, \psi, \Delta \vdash \theta
				}{\Gamma, \psi, \phi, \Delta \vdash \theta}
				&&&
				\infrule[weak]{
					\Gamma \vdash \phi
				}{\Gamma,\psi \vdash \phi}
			\end{aligned}
			\\[2ex]
			\textsc{1. Structural rules}			
			\\[2ex]
			\begin{aligned}
				\infrule[0-1]{}{ \vdash 0\land (0<1<\bot)}
				&&&
				\infrule[wem]{}{\vdash (\lnot\phi) \lor (\lnot\lnot\phi)}
				&&&
				\infrule[tot]{}{\vdash (\phi \lol \psi) \lor (\psi \lol \phi)}
			\end{aligned}
			\\[2ex]
			\begin{aligned}
				\doubleinfrule[$\ot_1$]{
					\hfill\Gamma, \phi, \psi \vdash \theta
				}{\Gamma, \phi \ot \psi \vdash \theta}
				&&&
				\infrule[$\lol_1$]{
					\Gamma, \phi \lol \theta \vdash \psi
					&
					\vdash\theta \geq \phi
				}{\Gamma, \theta \vdash \phi \ot \psi}
			\end{aligned}
			\\[2ex]
			\begin{aligned}
				\doubleinfrule[$\ot_2$]{
					\Gamma, \phi \ot \psi \vdash \theta
				}{\Gamma, \phi \vdash \psi \lol \theta}
				&&&
				\infrule[$\lol_2$]{
					\Gamma, \theta \vdash \phi \ot \psi
					&
					\vdash |\phi\lor \theta|
				}{\Gamma, \phi \lol \theta \vdash \psi}
			\end{aligned}
			\\[2ex]
			\textsc{2. Lawvere quantale rules}
			\\[2ex]
			\begin{aligned}
				\infrule[unit]{
				}{\vdash 1\phi\lollol\phi}
				&&&
				\infrule[zero]{
				}{\vdash 0\phi \lollol 0}
				&&& 
				\infrule[nullify]{\vdash\phi^{n+1}}{\vdash\phi}
			\end{aligned}
			\\[2ex]
			\begin{aligned}
				\infrule[comp]{\phi\vdash\psi}{\theta\phi\vdash\theta\psi}
				&&&
				\infrule[decomp]{\theta\phi\vdash\theta\psi & \vdash \bot > \theta  > 0}{\phi\vdash\psi}
			\end{aligned}
			\\[2ex]
			\begin{aligned}
				\infrule[a1]{
				}{\vdash(\phi\psi)\theta\lollol\phi(\psi\theta)}
				&&&
				\infrule[a2]{
				}{\vdash(r \cdot s)\phi\lollol r(s\phi)}
				&&&
				\infrule[comm]{
				}{\vdash \phi\psi\lollol\psi\phi}
			\end{aligned}
			\\[2ex]
			\textsc{3. Multiplicative rules}
			\\[2ex]
			\begin{aligned}
				\infrule[d1]{}{\vdash\theta(\phi\ot\psi)\lollol\theta\phi\ot\theta\psi}
				&&&
				\infrule[d2]{\psi>\phi>0\vdash |\theta|}{\vdash\theta(\phi\lol\psi)\lollol(\theta\phi\lol\theta\psi)} 
			\end{aligned}
			\\[2ex]
			\begin{aligned}
				\infrule[d3]{}{\vdash (r+s)\phi\lollol r\phi\ot s\phi}
				&&&
				\infrule[d4]{}{\vdash (r\dotdiv s)\phi\lollol (s\phi\lol r\phi)} 
			\end{aligned}
			\\[2ex]
			\textsc{4. Distributive rules}
		\end{array}
	\end{align*}
	\caption{Deduction system for polynomial Lawvere logic $\pl$.
		In the above, $\phi,\psi,\theta$ are formulas, $\Gamma, \Delta$ list of formulas, $r,s\in[0,\infty)$ positive reals, and $n$ a positive integer.}
	\label{tab:rules}
\end{table*}

The deduction system for $\pl$ is given in Table~\ref{tab:rules}.
It contains the basic inference rules of logical deduction (\textsc{id}) and (\textsc{cut}), and the structural rules of weakening (\textsc{weak}) and permutation (\textsc{perm}) (note that cancellation is not sound). It has rules reflecting the fact that the Lawvere quantale is integral, but also rules specific to Lawvere quantale alone: (\textsc{0-1}) states that $0$ is the bottom element and $1$ is neither the bottom nor the top element, (\textsc{wem}) is the weak excluded middle; (\textsc{tot}) states that the quantale is totally ordered; ($\ot_1$) states that $\ot$ behaves as additive conjunction; ($\ot_2$) is the adjunction rule for $\ot$ and $\lol$; while ($\lol_1$), ($\lol_2$) complement the adjunction rule by expressing the interactions in Lawvere quantale between the connectives $\ot$ and $\lol$ on opposite sides of $\vdash$. Note that ($\lol_2$) is conditional on the finiteness of specific formulas.
Lastly, there are the semiring rules together with rules for multiplication and distributivity. Some rules are specific to our semiring: for instance, (\textsc{d2}) is conditional to the order between $\phi$ and $\psi$ and the finiteness of $\theta$. 


\begin{definition}[Provability]
Let $S$ be a set of judgements. 
We say that a judgement $\gamma$ is \emph{provable from} 
(or \emph{deducible from}) $S$, if there exists a 
sequence $\gamma_1, \dots, \gamma_n$ of judgements 
ending in $\gamma$ whose members are either 
members of $S$, or each follows from some preceding members of the sequence by using the inference rules of the deduction system.
A sequence $\gamma_1, \dots, \gamma_n$ as above is called \emph{proof}. 
\end{definition}

In what follows we will (safely) abuse notation: if $\gamma$ is a judgement, $S$ a set of judgements and $\M$ a model, we write
$\frac{\; S \;}{\gamma}$, if $\gamma$ is provable from $S$;
$\M \models S$ if $\M$ is a model of all judgements in $S$. If $S=\{\gamma\}$, we write $\M\models\gamma$.

\begin{theorem}[Soundness]\label{soundness}
If a judgement $\gamma$ is provable from $S$ in $\pl$, then $\gamma$ is a semantic consequence of $S$. In symbols:
\begin{align*}
  \frac{\; S \;}{\gamma} 
  &&\text{implies} &&
  S \models \gamma  \,.
\end{align*}
\end{theorem}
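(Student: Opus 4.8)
The plan is to argue by induction on the length of a proof of $\gamma$ from $S$, reducing everything to a single local property of the rules. Call an inference rule \emph{model-wise sound} if, for every model $\M$, whenever $\M$ satisfies all of its hypotheses it also satisfies its conclusion. If $\gamma$ is a member of $S$, then $S \models \gamma$ is immediate; and if every rule of Table~\ref{tab:rules} is model-wise sound, then an easy induction shows that each judgement occurring in a proof from $S$ is satisfied by every model $\M$ with $\M \models S$, whence $S \models \gamma$. Thus it suffices to verify model-wise soundness rule by rule, and in each case I would fix a model $\M$, unfold the semantics of judgements, namely $\Gamma \models_\M \psi$ iff $\sum_{\phi\in\Gamma}\M(\phi) \geq \M(\psi)$, and reduce the claim to an elementary arithmetic fact in $\extreals$.

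The bulk of the rules fall to direct computation. The structural rules are immediate: (\textsc{perm}) leaves the antecedent sum unchanged and (\textsc{weak}) uses only $\M(\psi) \geq 0$, while (\textsc{id}) and (\textsc{cut}) follow from monotonicity and transitivity of $+$ and $\geq$. For the quantale rules, ($\ot_1$) and ($\ot_2$) rest on $\M(\phi\ot\psi) = \M(\phi)+\M(\psi)$ together with the quantale adjunction $a + b \geq c \iff b \geq c \dotdiv a$; and ($\ot_3$) follows by halving $2\M(\phi) \geq 2\M(\psi)$. The lattice rules are read off from $\M(\phi\land\psi) = \max\{\M(\phi),\M(\psi)\}$ and $\M(\phi\lor\psi) = \min\{\M(\phi),\M(\psi)\}$. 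The zero-premise rules are verified by computing the interpretation of the conclusion and checking it meets the required threshold: (\textsc{top}), (\textsc{bot}), (\textsc{wem}), (\textsc{tot}) and the product identities ($P_1$)--($P_6$), ($P_9$) all evaluate their conclusion to $0$ (for a judgement $\vdash\psi$ this is exactly $\M(\psi)=0$), using associativity, commutativity, the unit law, and distributivity of multiplication over $+$, $\dotdiv$, $\max$, and $\min$, together with the convention $0\cdot\infty = 0$ for ($P_6$).

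The interesting cases, and the only places requiring care, are those sensitive to the value $\infty$ and to strict positivity. Rules ($\lol_2$) and ($\lol_3$) carry the side conditions $\vdash \lnot\lnot\phi$ and $\vdash\lnot\lnot\theta$, which by the remarks following the semantics assert $\M(\phi)<\infty$, respectively $\M(\theta)<\infty$; this finiteness is exactly what licenses cancelling the corresponding summand when passing between $\M(\theta)\dotdiv\M(\phi)$ and an additive inequality, a step that fails at $\infty$ and explains why the conditions cannot be dropped. Rule (\textsc{one}) is sound vacuously: since $\M(\un\lor\lnot\un) = \min\{1,\infty\} = 1 \neq 0$, its hypothesis $\vdash \un\lor\lnot\un$ holds in no model, so there is nothing to check. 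Rule ($P_8$) is where multiplication meets positivity: if $\M$ satisfies $\phi\psi\vdash\un$ then $\M(\phi)\M(\psi)\geq 1 > 0$, forcing $\M(\phi) > 0$, whence $\M(\phi)\cdot\infty = \infty \geq \infty$, i.e.\ $\M$ satisfies $\phi\bot\vdash\bot$. I expect the main obstacle to be purely bookkeeping: the distributivity identities in ($P_4$) and ($P_9$) over $\dotdiv$, and the adjunction steps in ($\ot_2$), ($\lol_1$)--($\lol_3$), each demand a short case split on whether the relevant values are $0$, finite, or $\infty$, invoking $0\cdot\infty=0$; none of these is deep, but they must all be discharged to complete the induction.
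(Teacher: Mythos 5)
The paper states Theorem~\ref{soundness} without proof, so there is no argument of record to compare yours against; your strategy --- induction on the length of a derivation, reduced to checking that each rule of Table~\ref{tab:rules} is model-wise sound --- is the standard one and clearly what the authors intend. Your treatment of the genuinely delicate rules is right: the side conditions $\vdash\lnot\lnot\phi$ and $\vdash\lnot\lnot\theta$ in ($\lol_2$), ($\lol_3$) are exactly what licenses the cancellation step across $\dotdiv$; (\textsc{one}) is vacuously sound because its premise holds in no model; and ($P_8$) correctly turns $\M(\phi)\M(\psi)\geq 1$ into strict positivity of $\M(\phi)$ and hence $\M(\phi)\cdot\infty=\infty$.

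However, the claim that the distributivity axioms ``all evaluate their conclusion to $0$'' after ``a short case split'' is not something you have actually verified, and it is where the induction breaks on the rules as literally stated. Take ($P_4$) with $\diamond=\lol$ and a model with $\M(\theta)=\infty$, $\M(\phi)=1$, $\M(\psi)=2$. Then $\M(\theta(\phi\lol\psi))=\infty\cdot(2\dotdiv 1)=\infty$, whereas $\M(\theta\phi\lol\theta\psi)=(\infty\cdot 2)\dotdiv(\infty\cdot 1)=\infty\dotdiv\infty=0$, so the conclusion $\vdash\theta(\phi\lol\psi)\lollol(\theta\phi\lol\theta\psi)$ evaluates to $\infty$ rather than $0$ and the axiom fails in this model; the same phenomenon occurs for ($P_9$) with $(\dotdiv,\lol^{op})$ when $\M(\phi)=\infty$. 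So the $\infty$ branch of your case split does not close: either these axioms need a finiteness restriction (or must be justified by an argument you have not given), or the soundness statement has to be qualified. A complete proof must confront this case explicitly rather than fold it into ``bookkeeping''; the remaining rules do verify as you indicate.
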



Any judgement $\phi_1, \dots, \phi_n \vdash \psi$ is provably equivalent to $\phi_1 \ot \dots \ot \phi_n \vdash \psi$. Moreover, any judgement of type $\phi \vdash \psi$ is provably equivalent to $\vdash \phi \lol \psi$. Hence, without loss of generality, we may assume that arbitrary judgements in $\pl$ are of the form $\vdash\theta$, for some formula $\theta$.

In~\cite{Bacci2023} it is shown that $\al$ does not enjoy a deduction theorem, not even in the weak form that holds for fuzzy logics, such as {\L}ukasiewicz, G{\"o}del, or product logics. This is because we have proven that in $\al$ it is not possible to ``internalize'' provability in the language of the logic. However, in $\pl$, the expressivity provided by multiplication allows us to do it by ``Booleanizing'' the judgements.
\begin{theorem}[Deduction Theorem] \label{th:deductionThm}
For any formulas $\phi,\psi$ in $\pl$, 
\begin{align*}
	\frac{\; \vdash\phi \;}{\vdash\psi} 
	&& \text{iff} && \frac{\; \vdash 0 \geq \phi \;}{\vdash0 \geq \psi}  && \text{iff} && \vdash(0 \geq \phi)\lol(0 \geq \psi)\,.
\end{align*}
\end{theorem}

 

\section{Applications: Proving Properties of Distances} 
\label{sec:applications}

In this section, we show how the deductive system for $\pl$ 
can be used to reason about the properties of two well-known distances on probability distributions, namely, the total variation and the Kantorovich distance, and we discuss embedding quantitative equational logic in $\pl$.

Let $X = \{x_1, \dots, x_n\}$ be a finite (extended) metric space with distances $d_{ij}$ between $x_i$ and $x_j$ possibly taking $\infty$ as value.
Denote by $\mu$, $\nu$, $\rho$ be three generic discrete probabilities on $X$ and by $\mu_i$, $\nu_i$, $\rho_i$ their probabilities at $x_i \in X$.

\smallskip
\textbf{Total Variation.} The total variation distance between $\mu$ and $\nu$ is defined as 
$d_{TV}(\mu,\nu) = \max_{A\subseteq X}|\mu(A) - \nu(A)|$,  which can be encoded in $\pl$ by the formula 
$t_{\mu,\nu} \coloneq \bigwedge_{A \subseteq \{1..n\}} (\bigoplus_{i\in A} \mu_i \lollol \bigoplus_{i\in A} \nu_i)$. 
A simple example to start with is to demonstrate that the total variation is a pseudo-metric, \ie, satisfies the axioms of reflexivity, symmetry, and triangle inequality, which can be expressed in $\pl$: 
\begin{align*}
    (\textsc{refl}) \; \vdash t_{\mu,\mu} 
    &&
    (\textsc{symm}) \; t_{\mu,\nu} \vdash t_{\nu,\mu} \,,
    &&
    (\textsc{triang}) \; t_{\mu,\nu} , t_{\nu,\rho} \vdash t_{\mu,\rho} \,.
\end{align*}
The first two are trivial to derive. The derivation of the third is shown below:
\begin{equation*}
\adjustbox{max width=0.95\textwidth}{%
\small
\infrule[def, $\ot_1$]{
\infrule[$\ot_1,\land_1$,$\land_2$]{
\infrule[$\land_2$]{
\infrule[$\ot_1,\land_1$]{
    \infrule[$\ot_2$]{
        \infrule[$\ot_1,\lol_2$]{
            \infrule[id]{}{
            \mu_i\ot\nu_i\ot\rho_i \vdash \mu_i\ot\nu_i\ot\rho_i}
        }{\mu_i \ot (\mu_i\lol\nu_i) \ot (\nu_i\lol\rho_i) \vdash \rho_i}
    }{(\mu_i\lol\nu_i) \ot (\nu_i\lol\rho_i) \vdash \mu_i\lol\rho_i }
}{(\mu_i\lollol\nu_i) \ot (\nu_i\lollol\rho_i) \vdash \mu_i\lol\rho_i}
&
\infrule[$\ot_1,\land_1$]{
    \infrule[$\ot_2$]{
        \infrule[$\ot_1,\lol_2$]{
            \infrule[id]{}{
            \rho_i\ot\nu_i\ot\rho_i \vdash \mu_i\ot\nu_i\ot\rho_i}
        }{\rho_i \ot (\nu_i\lol\mu_i) \ot (\rho_i\lol\nu_i) \vdash \mu_i}
    }{(\nu_i\lol\mu_i) \ot (\rho_i\lol\nu_i) \vdash \rho_i\lol\mu_i }
}{(\mu_i\lollol\nu_i) \ot (\nu_i\lollol\rho_i) \vdash \rho_i\lol\mu_i}
}{(\mu_i\lollol\nu_i) \ot (\nu_i\lollol\rho_i) \vdash \mu_i\lollol\rho_i}
}{%
\bigwedge_{A \subseteq \{1..n\}} (\bigoplus_{i\in A} \mu_i \lollol \bigoplus_{i\in A} \nu_i)
\ot 
\bigwedge_{A \subseteq \{1..n\}} (\bigoplus_{i\in A} \nu_i \lollol \bigoplus_{i\in A} \rho_i)
\vdash 
\bigwedge_{A \subseteq \{1..n\}} (\bigoplus_{i\in A} \mu_i \lollol \bigoplus_{i\in A} \rho_i)}
}{t_{\mu,\nu} , t_{\nu,\rho} \vdash t_{\mu,\rho}}
}
\end{equation*}
Note that some steps of the derivation use meta-rules which are derivable from the rules in Table~\ref{tab:rules}, such as ($\land_1$), ($\land_2$) (see Appendix~\ref{app:useful}).

The total variation is not just a pseudo-metric, but a proper metric satisfying the
Fr\'echet positivity axiom, which can be expressed in $\pl$ by the judgement
\begin{equation*}
    (\textsc{positivity}) \; \bigwedge (\mu_i \neq \nu_i) \vdash (t_{\mu,\nu} > 0) \,.
\end{equation*}
Observe that the above uses the Boolean formulas of $\pl$, which can be expressed using multiplication by $\bot$. In fact, this is a non-linear property that cannot be captured by $\al$ as it allows only affine formulas.

\smallskip
\textbf{Kantorovich distance.} 
The Kantorovich distance%
\footnote{Also known as the Wasserstein distance or Earth's mover distance.}
between $\mu$ and $\nu$ can be defined using the following two equivalent (dual) formulations
\begin{equation*}
    d_K(\mu,\nu) = \inf_{\omega} \sum_{i,j} \omega_{ij} d_{ij} 
    = \sup_{f} \Big|\sum_i f_i \mu_i - \sum_i f_i \nu_i \Big|
    \tag{K-R duality}
\end{equation*}
where $\omega$ ranges over joint probability distributions with $\mu$ as left-marginal (\ie, $\sum_j \omega_{ij} = \mu_i$, for all $i$) and $\nu$ as right-marginal (\ie, $\sum_i \omega_{ij} = \nu_j$, for all $j$);
and $f$ over non-expanding $\preals$-valued maps on $X$, \ie, $|f_i - f_j| \leq d_{ij}$, for all $i,j$.

As its definitions involve $\inf$ (infimum) on one hand, and $\sup$ (supremum) on the other hand, we cannot express the Kantovich distance as a single formula in $\pl$. However, we should not despair as we can still reason about it if we can find a finite set of judgements that uniquely characterises its value. The set we propose, hereafter denoted by $\mathcal{K}$, contains the
following judgements:
\begin{gather*}
    \vdash 
    \bigwedge_i (\bigoplus_j W_{ij} \lollol \mu_i) 
    \land
    \bigwedge_j (\bigoplus_i W_{ij} \lollol \nu_j) \,,
    \quad
    \vdash \bigwedge_{i,j} \big( d_{ij} \lol (F_j \lol F_i) \big)
    \land \bigwedge_i |F_i| \,,
    \\
    \begin{aligned}
        \bigoplus_i F_i \mu_i \lollol \bigoplus_i F_i \nu_i \vdash K_{\mu,\nu} \,, 
        &&
        K_{\mu,\nu} \vdash \bigoplus_{i,j} W_{ij} d_{ij} \,,
    \end{aligned}
\end{gather*}
where $W_{ij}$, $F_i$, and $K_{\mu,\nu}$ are propositional variables. 
This set is derived by following the steps of the proof of (strong) duality in linear programs~\cite{Schrijver1998}, specifically tailored to the K-R duality presented above. The first two judgments represent the conjunction of the constraints from both the primal and dual linear programs (\ie, the marginal conditions on $\omega$ and the non-expanding condition on $f$). The last two imply $\bigoplus_i F_i \mu_i \lollol \bigoplus_i F_i \nu_i \vdash \bigoplus_{i,j} W_{ij} d_{ij}$, corresponding to the optimality condition for the feasible solutions. The variable $K_{\mu,\nu}$ is a  convenience.

%
This encoding is such that all the models that satisfy $\mathcal{K}$ assign the variable $K_{\mu,\nu}$ value $d_K(\mu,\nu)$, \textit{i.e.}, the Kantorovich distance between $\mu$ and $\nu$.
Indeed, next we show that from $\mathcal{K}$ we can deduce  
\begin{align}
    \vdash K_{\mu,\nu} \lollol \Big(\bigoplus_i F_i \mu_i \lollol \bigoplus_i F_i \nu_i  \Big)
    &&\text{and}&&
    \vdash K_{\mu,\nu} \lollol \bigoplus_{i,j} W_{ij} d_{ij}  \,.
    \label{eq:k}
\end{align}
The above follows by deriving the following two judgments from $\mathcal{K}$
\begin{align*}
    \bigoplus_{i,j} W_{ij} d_{ij}  \ot \bigoplus_i F_i \mu_i \vdash \bigoplus_j F_j \nu_j \,,
    &&
    \bigoplus_{i,j} W_{ij} d_{ij} \ot 
    \bigoplus_j F_j \nu_j \vdash \bigoplus_i F_i \mu_i
\end{align*}
as they imply $\bigoplus_{i,j} W_{ij} d_{ij} \vdash \bigoplus_i F_i \mu_i \lollol \bigoplus_i F_i \nu_i$. Note that this corresponds to the steps of the proof of weak duality in linear programs. We show only the derivation of the first one as the other is similar. Below we provide only the schematic steps of the derivation, which would otherwise take too much space
\begin{align*}
\bigoplus_{i,j} W_{ij} d_{ij} \ot \bigoplus_i F_i \mu_i 
&\vdash \bigoplus_{i,j} W_{ij} d_{ij} \ot \bigoplus_i F_i (\bigoplus_j W_{ij})) 
\tag{left-marginal} \\
&\vdash \bigoplus_{i,j} F_j W_{ij}
\tag{\textsc{d1},$\ot_1$,\textsc{perm},non-expanding}\\
&\vdash \bigoplus_j F_j \nu_j
\tag{\textsc{d1}, right-marginal}
\end{align*}
In the above a concatenation of the form $\phi \vdash \psi \vdash \vartheta$ means that 
both $\phi \vdash \psi$ and $\psi \vdash \vartheta$ are derivable; the desired result follows by repeated applications of (\textsc{cut}).

Now that we have established a way to encode the Kantorovich distance, we can prove some of its properties. A well-known result from~\cite{Gibbs02} relating the 
Kantorovich distance with the total variation is 
\begin{equation*}
    d_K(\mu,\nu) \geq d_{\min} \cdot d_{TV}(\mu,\nu) \,,
\end{equation*}
where $d_{\min} = \min_{i \neq j}d_{ij}$. According to our encoding, such a statement is equivalent to establishing the following inference
\begin{equation*}
    \infrule{\mathcal{K}}{ K_{\mu,\nu} \vdash (\bigvee_{i \neq j}d_{ij}) t_{\mu,\nu}}
\end{equation*}
Due to lack of space, below we provide only the sketch of the proof. The key steps of it are to show that the judgements below follow from $\mathcal{K}$ for all $A\subseteq \{1..n\}$
\begin{align*}
    \bigoplus_{i \neq j} W_{ij} \ot \bigoplus_{i \in A} \mu_i 
    \vdash
    \bigoplus_{i \in A} \nu_i 
    &&
    \bigoplus_{i \neq j} W_{ij} \ot \bigoplus_{i \in A} \nu_i 
    \vdash
    \bigoplus_{i \in A} \mu_i 
\end{align*}
from which, by using ($\ot_2$), ($\lor_2$), one gets $\bigoplus_{i \neq j} W_{ij} \vdash t_{\mu,\nu}$. Thus, by applying the deduction rules of $\pl$, \eqref{eq:k}, and the fact that $d_{ii} = 0$ for all $i$, we get 
\begin{equation*}
    K_{\mu,\nu} 
    \vdash \bigoplus_{i,j} W_{ij} d_{ij}
    \vdash \bigoplus_{i\neq j} W_{ij} d_{ij}
    \vdash \bigoplus_{i\neq j} W_{ij} (\bigvee_{i \neq j}d_{ij})
    \vdash (\bigvee_{i \neq j}d_{ij}) t_{\mu,\nu} \,.
\end{equation*}
The desired inference follows from the above by repeated applications of (\textsc{cut}).


\textbf{Quantitative Equational Logic (QEL).} 
Already in~\cite{Bacci2023} we have shown how one can embed QEL in $\al$: add variables $s=t$ for all the terms $s,t$ of a chosen quantitative algebra; encode quantitative equations $s=_\e t$ as judgements $\e\vdash s=t$ in $\al$; and represent the axioms of QEL as inference rules.  

Since $\al$ lacks a deduction theorem, the embedding of QEL in $\al$ relies on inference rules. However, in $\pl$, the inferences in~\cite[Table~2]{Bacci2023} can be formalized as proper judgments using the deduction theorem (Theorem~\ref{th:deductionThm}).
Additionally, while $\al$ can handle only affine functions, the axioms of barycentric algebras with the $p$-Wasserstein metric require polynomial functions~\cite{Mardare16}. These types of examples can now be expressed in $\pl$.

With this embedding we achieve an approximate completeness result: if, under the encoding, 
$s=_\e t$ is valid then $s=_{\e'} t$ is provable for all $\e' > \e$. An exact embedding would 
need one of QEL’s infinitary rule; we leave this for future work.



\section{Canonical \& Polynomial Forms}
\label{sec:canonicalforms}

We define canonical forms for $\pl$ formulas and their judgements and provide 
a method for reducing judgements to their canonical form. The canonicalisation of judgements is a crucial ingredient for our proof of completeness (Section~\ref{sec:completeness}) and also an essential tool for establishing the complexity results in 
Section~\ref{sec:complexity}. 

\begin{definition}[Canonical Forms]
A formula is in \emph{canonical form} (CF) if it is either $\bot$ or in \emph{proper canonical
form} (PCF), \ie, a formula generated by
\begin{align*}
   \theta &\coloneqq x \mid r \mid 
   \theta \ot \theta \mid \theta \lol \theta \mid \theta \theta \,
   && \text{for $x\in \X$ and $r \in \preals$} \,.
\end{align*}
A formula in PCF is in \emph{polymomial form} when it has no occurrences of $\lol$.
\end{definition}
Formulas in PCF are closed under conjunction, disjunction and double implication, but not negation. Observe that $\bot$ does not occur in formulas in PCF.

Canonical forms are extended to judgements in the obvious way: $\phi \vdash \psi$ is 
in a certain form (\eg, CF, PCF, or polynomial) if both $\phi$ and $\psi$ are in that form.

\subsubsection*{Provable Equalities between Polynomial Formulas.}
By identifying formulas up to commutativity-associativity of $\ot$ and multiplication (and alpha-conversion), we have that polynomial formulas on $n$-variables are in 1-1 correspondence with polynomials with positive coefficients in $n$-variables. For a polynomial formula $\theta$ with variables $x_1, \dots, x_n$, the corresponding polynomial $\sem{\theta} \colon \reals^n \to \reals$ is defined as follows, for $v = (v_1, \dots, v_n) \in \reals^n$
\begin{align*}
\begin{aligned}
  \sem{x_i}(v) &= v_i \,, \\
  \sem{r}(v) &= r \,,
\end{aligned}
&&
\begin{aligned}
  \sem{\theta \otimes \vartheta}(v) 
  	&= \sem{\theta}(v) + \sem{\vartheta}(v) \,, \\
  \sem{\theta \vartheta}(v) 
  	&= \sem{\theta}(v) \sem{\vartheta}(v) \,.
\end{aligned}
\end{align*}
It is not difficult to prove, inductively, that  
$\M(\theta) = \sem{\theta}(\M(x_1), \dots, \M(x_n))$, for all positive real-valued models $\M \colon \Prop \to \preals$.

\begin{lemma}\label{polid}
Let $\theta, \vartheta$ be in polynomial form. $\sem{\theta} = \sem{\vartheta}$ iff ${\vdash\theta\lollol\vartheta}$ is provable.
\end{lemma}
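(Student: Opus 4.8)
The plan is to prove the two implications separately: the direction ``provability implies equality of polynomials'' is an immediate consequence of soundness, while the converse requires a normal-form argument inside the proof system.

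For the ($\Leftarrow$) direction, suppose $\vdash \theta \lollol \vartheta$ is provable. By Soundness (Theorem~\ref{soundness}) this judgement is a tautology, so for every model $\M$ we have $\M(\theta \lollol \vartheta) = |\M(\vartheta) - \M(\theta)| = 0$, i.e.\ $\M(\theta) = \M(\vartheta)$; in particular this holds for every finitary $\M \colon \Prop \to \preals$. Since $\M(\theta) = \sem{\theta}(\M(p_1), \dots, \M(p_n))$ for such models and the values $\M(p_1), \dots, \M(p_n)$ range over all of $\preals^n$, the polynomials $\sem{\theta}$ and $\sem{\vartheta}$ agree on $\preals^n$. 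As $\preals^n$ has nonempty interior in $\reals^n$, a polynomial vanishing on it is identically zero, so $\sem{\theta} = \sem{\vartheta}$.

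For the ($\Rightarrow$) direction I would introduce a canonical \emph{sum-of-monomials} form. Writing the nonnegative-coefficient polynomial as $\sem{\theta} = \sum_{\alpha} c_\alpha\, x^\alpha$ with all $c_\alpha > 0$ and the multi-indices $\alpha$ listed in a fixed order, associate to it the formula $N(\sem{\theta}) \coloneq \bigotimes_\alpha c_\alpha * (p_1^{\alpha_1} \cdots p_n^{\alpha_n})$, with $N(0) \coloneq 0$. The core claim is that $\vdash \theta \lollol N(\sem{\theta})$ is provable for every polynomial formula $\theta$, shown by induction on $\theta$. The base cases $p_i$ and $\un$ are immediate; the scalar case $r*\theta$ rewrites via $P_1$ as multiplication by the constant $r$ and distributes over the normal form of $\theta$; for $\theta \ot \vartheta$ one concatenates the two normal forms and combines like monomials through the $(+,\ot)$ instance of $P_9$, discarding any zero term via $P_6$; and for a product $\theta\vartheta$ one distributes over $\ot$ using $P_4$, normalises each resulting product of monomials to a single scaled monomial by associativity, commutativity and unit of multiplication ($P_2$, $P_3$, $P_5$), and again collects like monomials. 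Once the claim holds, $\sem{\theta} = \sem{\vartheta}$ forces $N(\sem{\theta})$ and $N(\sem{\vartheta})$ to be literally the same formula, so transitivity of provable equivalence gives $\vdash \theta \lollol \vartheta$. The affine case is simply the restriction to monomials of degree at most one.

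The main obstacle I anticipate is infrastructural rather than arithmetical: to run the rewriting induction I first need that provable equivalence $\vdash - \lollol -$ is an equivalence relation which is moreover a \emph{congruence}, so that an equivalence proved for a subformula can be transported into any surrounding context (in particular under $\ot$ and under multiplication). Establishing this congruence from the rules, together with the associativity, commutativity and unit laws for $\ot$ already recorded in the text, is the technical heart of the argument; indeed the product rules $P_1$--$P_9$ are exactly what make the quotient of polynomial formulas by provable equivalence a commutative $\preals$-semiring, and $N$ is just a choice of canonical representative therein. The only genuinely arithmetical input is the classical fact that distinct monomials $x^\alpha$ are linearly independent, which guarantees that the coefficients $c_\alpha$, and hence $N(\sem{\theta})$, are uniquely determined by $\sem{\theta}$.
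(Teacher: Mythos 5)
Your proposal is correct and follows essentially the same route as the paper's proof: soundness plus the fact that a polynomial identity on $[0,\infty)^n$ extends to all of $\reals^n$ for one direction, and rewriting both formulas to a common sum-of-monomials normal form via the product rules ($P_1$--$P_6$, $P_9$) together with associativity and commutativity for the other. Your write-up is more explicit than the paper's --- in particular in isolating the congruence property of $\vdash -\lollol-$ needed to transport equivalences through contexts, which the paper leaves implicit --- but the underlying argument is the same.
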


The following corollary allows us to reason about provable equalities between non-polynomial formulas in terms of equalities over
polynomial formulas.
\begin{corollary} \label{polidext}
Let $\theta$, $\vartheta$ be polynomial formulas such that $\sem{\theta} = \sem{\vartheta}$, 
$\phi$, $\psi$ two formulas and $S$ a finite set of judgements in $\pl$. Then,
\begin{align*}
\frac{S}{\phi\lollol\psi \vdash \theta\lollol\vartheta} \,\;\text{ and }\;\,
\frac{S}{\theta\lollol\vartheta \vdash \phi\lollol\psi} 
&&\text{implies} &&
\frac{S}{\vdash\phi \lollol \psi} \,.
\end{align*} 
\end{corollary}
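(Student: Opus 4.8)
The plan is to read the polynomial equality $\sem{\theta}=\sem{\vartheta}$ through Lemma~\ref{polid}, which turns it into a provable judgement $\vdash\theta\lollol\vartheta$, and then to transport this fact along the supplied derivation $\theta\lollol\vartheta\vdash\phi\lollol\psi$ by a single application of (\textsc{cut}).

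First I would note that, since $\theta$ and $\vartheta$ are polynomial formulas with $\sem{\theta}=\sem{\vartheta}$, Lemma~\ref{polid} yields a proof of $\vdash\theta\lollol\vartheta$; importantly this proof uses no hypotheses, so it is in particular a proof from $S$. Next, the second hypothesis gives a proof of $\theta\lollol\vartheta\vdash\phi\lollol\psi$ from $S$. I would then apply (\textsc{cut}) with $\Gamma=\Delta=\emptyset$ and cut formula $\theta\lollol\vartheta$, gluing the two proofs into a proof of $\vdash\phi\lollol\psi$ from $S$, which is exactly the conclusion.

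I do not expect any genuine obstacle: all the mathematical content sits in Lemma~\ref{polid}, which manufactures the unconditionally provable ``seed'' $\vdash\theta\lollol\vartheta$, after which (\textsc{cut}) finishes the job. The only point worth flagging is that the argument uses solely the second of the two displayed derivations; the first, $\phi\lollol\psi\vdash\theta\lollol\vartheta$, is never invoked (and is in any case semantically vacuous, since soundness applied to $\vdash\theta\lollol\vartheta$ forces $\M(\theta\lollol\vartheta)=0$ in every model, whence $\M(\phi\lollol\psi)\geq\M(\theta\lollol\vartheta)$ holds trivially). I would therefore keep both hypotheses in the statement only because, in the intended use during the completeness proof, $\phi\lollol\psi$ and $\theta\lollol\vartheta$ are established as provably equivalent as a single package, making it natural and harmless to record both directions.
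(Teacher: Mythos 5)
Your proposal is correct and matches the paper's own argument: the paper likewise invokes Lemma~\ref{polid} to obtain the provable seed $\vdash\theta\lollol\vartheta$ (the paper's proof contains a typo, writing $\vdash\phi\lollol\psi$ there) and then combines it with the supplied derivations via (\textsc{cut}). Your observation that only the second hypothesis is actually needed is accurate; the paper's proof nominally cites both but uses them in the same way.
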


\subsubsection*{Canonicalisation.}
For a formula $\phi$ in $\pl$, we define its canonical form $\phi^{\sf cf}$, by induction on the formula as follows:
\begin{gather*}
\begin{aligned}
	x^{\sf cf} = x
	&&
	\bot^{\sf cf} = \bot
	&&
	r^{\sf cf} = r 
\end{aligned}
\\
\begin{aligned}
   (\phi \ot \psi)^{\sf cf} &= 
  	\begin{cases}
	  \bot  &\text{if $\phi^{\sf cf} = \bot$ or $\psi^{\sf cf} = \bot$} \\
	  \phi^{\sf cf} \ot \psi^{\sf cf} &\text{otherwise}
	\end{cases}
   \\
   (\phi \lol \psi)^{\sf cf} &= 
  	\begin{cases}
	  0 &\text{if $\phi^{\sf cf} = \bot$} \\
	  \bot  &\text{if $\phi^{\sf cf} \neq \bot$ and $\psi^{\sf cf} = \bot$} \\
	  \phi^{\sf cf} \lol \psi^{\sf cf} &\text{otherwise}
	\end{cases}
   \\
   (\phi \psi)^{\sf cf} &= 
  	\begin{cases}
	  0	&\text{if $\phi^{\sf cf} = 0$ or $\psi^{\sf cf} = 0$} \\
	  \bot  &\text{if $\phi^{\sf cf},\psi^{\sf cf} \neq 0$ and ($\phi^{\sf cf} = \bot$ or $\psi^{\sf cf} = \bot$)} \\
	  \phi^{\sf cf} \psi^{\sf cf} &\text{otherwise}
	\end{cases}
\end{aligned}
\end{gather*}

Formulas in $\pl$ are not necessarily semantically equivalent to their canonical form, but they get the same value in all
$(0,\infty)$-valued models.
\begin{proposition} \label{prop:cfSemantic}
  For any formula $\phi$, 
   $\phi^{\sf cf}$ is in canonical form, and for all models $\M \colon \Prop \to (0,\infty)$, 
    $\M(\phi) = \M(\phi^{\sf cf})$.
\end{proposition}

Observe that the restriction to $\infty$-free models is necessary, as if  
$\M(x) = \infty$, $\M(x \lol \bot) = 0$ while $\M((x \lol \bot)^{\sf cf}) = \M(\bot) = \infty$. Similarly, $0$-free models are necessary, as for $\M(x) = 0$, $\M(x\bot) = 0$ and $\M((x\bot)^{\sf cf})= \M(\bot) = \infty$.

\begin{definition}[Canonicalisation]
The \emph{canonical form} of a judgement $\phi \vdash \psi$ (also referred to as \emph{canonicalisation}) is $\phi^{\sf cf} \vdash \psi^{\sf cf}$.
Canonicalisation is extended to sets of judgements by $S^{\sf cf} = \{ \phi^{\sf cf} \vdash \psi^{\sf cf} \mid \phi \vdash \psi \in S\}$.
\end{definition}


\begin{proposition}\label{prop:cfOFcf}
Both $\vdash \phi^{\sf cf} \lollol (\phi^{\sf cf})^{\sf cf}$ and $\phi^{\sf cf} \vdash \phi$
are valid and provable. Moreover, if $\phi$ is in CF then $\phi = \phi^{\sf cf}$.
\end{proposition}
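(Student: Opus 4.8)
The plan is to treat the two judgements separately, as they are of quite different character. For the second one, I will establish provability and then read off validity from soundness (Theorem~\ref{soundness}); for the first I will give validity directly.

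For $\vdash\phi^{\sf cf}\lollol(\phi^{\sf cf})^{\sf cf}$ I would reduce everything to reflexivity. By Proposition~\ref{prop:cfSemantic} the formula $\phi^{\sf cf}$ is already in canonical form, so Proposition~\ref{prop:cfOFcf} applies to it and yields the \emph{syntactic} identity $(\phi^{\sf cf})^{\sf cf}=\phi^{\sf cf}$. The goal is therefore literally $\vdash\phi^{\sf cf}\lollol\phi^{\sf cf}$, so it suffices to show that $\vdash\chi\lollol\chi$ is valid and provable for every $\chi$. Validity is immediate since $\M(\chi\lollol\chi)=|\M(\chi)-\M(\chi)|=0$. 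For provability, unfold $\chi\lollol\chi=(\chi\lol\chi)\land(\chi\lol\chi)$, so that by $(\land_2)$ it is enough to derive $\vdash\chi\lol\chi$; this I obtain from $(\textsc{tot})$ at $\phi=\psi=\chi$, which gives $\vdash(\chi\lol\chi)\lor(\chi\lol\chi)$, followed by $(\lor_1)$ on two copies of $(\textsc{id})$ to get $(\chi\lol\chi)\lor(\chi\lol\chi)\vdash\chi\lol\chi$, and a single $(\textsc{cut})$.

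For $\phi^{\sf cf}\vdash\phi$ I would argue by induction on $\phi$, following the clauses defining $\phi^{\sf cf}$. The base cases $p,\bot,\un$ hold by $(\textsc{id})$, since there $\phi^{\sf cf}=\phi$. Every clause that returns $\bot$ — such as $\phi\ot\psi$ with a factor canonicalising to $\bot$, or $\phi\lol\psi$ with $\phi^{\sf cf}\neq\bot$ and $\psi^{\sf cf}=\bot$ — produces a goal of the shape $\bot\vdash\phi$, discharged immediately by $(\textsc{bot})$. The ``otherwise'' clauses for $r*{-}$, $\ot$ and multiplication are handled by first deriving the corresponding monotonicity principles (from $\alpha\vdash\alpha'$ and $\beta\vdash\beta'$ infer $\alpha\ot\beta\vdash\alpha'\ot\beta'$, $r*\alpha\vdash r*\alpha'$, and $\alpha\beta\vdash\alpha'\beta'$), each obtainable from the structural, lattice and product rules $(P_1)$--$(P_9)$, and then feeding in the induction hypotheses $\phi^{\sf cf}\vdash\phi$ and $\psi^{\sf cf}\vdash\psi$. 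Among the clauses returning $0$, the scalar case $r=0$ gives the goal $0\vdash 0*\phi$, which holds outright (both sides evaluate to $0$) and is provable via $(P_1)$ and $(P_6)$; and a multiplication $(\phi\psi)^{\sf cf}=0$ with, say, $\phi^{\sf cf}=0$ uses the hypothesis $0\vdash\phi$ to force $\phi$ to behave as $0$, whence $0\vdash\phi\psi$ follows from monotonicity of multiplication together with $(P_6)$.

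The case I expect to be the main obstacle is the implication clause $(\phi\lol\psi)^{\sf cf}=0$ triggered by $\phi^{\sf cf}=\bot$. Here the induction hypothesis only delivers $\bot\vdash\phi$, which carries no usable information about $\M(\phi)$, yet the goal $0\vdash\phi\lol\psi$ must be established directly. The intended content is that $\phi^{\sf cf}=\bot$ witnesses that $\phi$ is ``infinite'', so that $\phi\lol\psi$ collapses to $0$ exactly as Proposition~\ref{prop:cfSemantic} guarantees on the positive finitary models; promoting this collapse to the level of provability, uniformly over \emph{all} models, is the delicate point and is where I would concentrate the work. Concretely I would try to isolate a subsidiary lemma asserting that $\phi^{\sf cf}=\bot$ entails provability of $0\vdash\phi\lol\psi$, proved by a secondary induction on $\phi$ that tracks how a canonical $\bot$ is generated and exploits the interplay of $(\textsc{bot})$ with the adjunction rules $(\lol_1)$--$(\lol_3)$ and their finiteness side-conditions. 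I expect this lemma, rather than the surrounding induction, to carry essentially all of the difficulty.
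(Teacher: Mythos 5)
Your treatment of the first judgement coincides with the paper's: Proposition~\ref{prop:cfSemantic} plus Proposition~\ref{prop:cfOFcf} give the syntactic identity $(\phi^{\sf cf})^{\sf cf}=\phi^{\sf cf}$, and the goal collapses to the reflexivity tautology $\vdash\chi\lollol\chi$; your explicit derivation via (\textsc{tot}), $(\lor_1)$, (\textsc{id}), (\textsc{cut}) and $(\land_2)$ is fine. For the second judgement your overall strategy --- induction on $\phi$ with monotonicity of the connectives, $(\textsc{bot})$ for the clauses returning $\bot$, and $(P_1),(P_6)$ for the clauses returning $0$ --- is also exactly what the paper's (one-line) proof indicates.

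However, the case you flag as ``the main obstacle'' is a genuine gap, and the subsidiary lemma you propose to close it is false, so the gap cannot be filled along the route you sketch. You want: $\phi^{\sf cf}=\bot$ entails provability of $0\vdash\phi\lol\psi$. By soundness this would force validity, but $\phi^{\sf cf}=\bot$ does \emph{not} force $\M(\phi)=\infty$ in arbitrary models. Take $\phi=p\bot$ and $\psi=q$ with $\M(p)=0$, $\M(q)=1$: then $(p\bot)^{\sf cf}=\bot$, so $((p\bot)\lol q)^{\sf cf}=0$, yet $\M(p\bot)=0\cdot\infty=0$ and $\M((p\bot)\lol q)=\M(q)\dotdiv 0=1>0$, so $0\vdash(p\bot)\lol q$ is not satisfied by $\M$ and hence not provable. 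An analogous failure already occurs in $\al$ with $\phi=p\lol\bot$ and $\M(p)=\infty$. So in this case neither provability nor validity of $\phi^{\sf cf}\vdash\phi$ holds for all models; the collapse of $\phi\lol\psi$ to $0$ is only guaranteed on the positive finitary models of Proposition~\ref{prop:cfSemantic}, and it cannot be ``promoted'' to unrestricted provability. To your credit, you put your finger on precisely the point that the paper's own proof (which simply invokes monotonicity) passes over in silence; but as written your argument does not establish the statement, and the specific repair you propose is refuted by the counterexample above. Any correct completion must either restrict the claim (e.g.\ to fp-provability in the sense of Proposition~\ref{prop:fpProvabilityCF}, where the hypotheses $\FP$ let one derive $\bot\vdash\phi^{\sf cf}\lol\ \phi$-type facts in the offending clause) or avoid the $\lol$-clause with $\phi^{\sf cf}=\bot$ altogether.
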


\begin{remark} \label{rmk:equivCounterExpl}
The inferences below are both unsound, thus not provable in the deductive system for $\pl$.
\begin{align*}
\infrule[a]{\phi \vdash \psi}{\phi^{\sf cf} \vdash \psi^{\sf cf}} 
&&
\infrule[b]{\phi^{\sf cf} \vdash \psi^{\sf cf}}{\phi \vdash \psi}
\end{align*}
A counterexample for the soundness of (A) is when $\phi = x$ and $\psi = x\bot$: 
take $\M(x) = 0$, then $\M(x) \geq \M(x\bot) = 0$ but 
$\M(x^{\sf cf}) < \infty = \M(\bot) = \M((x\bot)^{\sf cf})$.
\\
A counterexample for the soundness of (B) is when $\phi = x \lol \bot$ and $\psi = x$: 
take $\M(x) = \infty$, then $\M((x \lol \bot)^{\sf cf}) = 0 \geq \M(x^{\sf cf})$,
but $\M(x \lol \bot) = 0 < \M(x)$.
\end{remark}

Intuitively, the canonicalisation of a judgement is better interpreted as an ``over approximation'', where the propositional variables are assumed to be always finite real-valued and strictly positive. Next,
we introduce a definition that will help us express
the concept sketched above. 

Fix the sets of judgements $\FF = \{ \vdash |x| \mid x \in \Prop \}$ 
and $\PP = \{ \vdash x > 0 \mid x \in \Prop \}$.
\begin{definition}[Restricted Provability]
%
%
%
For a judgment $\gamma$ and a set of judgments $S$, we say that $\gamma$ is \emph{$\preals$-provable} from $S$ if it is provable from $S \cup \FF$, \emph{$(0,\infty]$-provable} if it is provable from $S \cup \PP$, and \emph{$(0,\infty)$-provable} if it is provable from $S \cup \FF \cup \PP$.
\end{definition}
Intuitively, adding $\FF$ to the hypothesis asserts the finiteness of the variables, while $\PP$ ensures they are strictly positive.

\begin{proposition} \label{prop:fpProvabilityCF}
All judgements in $\pl$ are $(0,\infty)$-provably equivalent to their canonical forms, \ie, for all formulas $\phi$,$\psi$ in $\pl$
\begin{align*}
\frac{\phi \vdash \psi \quad \FF\cup\PP}{\phi^{\sf cf} \vdash \psi^{\sf cf}}
&& \text{and} &&
\frac{\phi^{\sf cf} \vdash \psi^{\sf cf} \quad \FF \cup \PP}{\phi \vdash \psi} \,.
\end{align*}
\end{proposition}



\section{Completeness and Incompleteness Results for $\pl$}
\label{sec:completeness}

We prove firstly that, in general, $\pl$ is incomplete.


\begin{theorem}[Incompleteness] \label{th:incompleteness}
The logic $\pl$ is incomplete, meaning that there exist a set of judgements $S$ and a judgement $\gamma$ such that $\gamma$ is a semantically consequence of $S$, but $\gamma$ is not provable from $S$ in $\pl$.
\end{theorem}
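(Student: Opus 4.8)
The plan is to reuse, rather than reconstruct, the incompleteness witness for $\al$ from~\cite{Bacci2023}. That result supplies an $\al$-theory $\T$ and an $\al$-judgement $\gamma$ with $\T \models \gamma$ but $\gamma$ not provable from $\T$ in $\al$. First I would record that every $\al$-formula is literally a $\pl$-formula and that the interpretation clauses for the $\al$-connectives are identical in the two logics: both use the same models $\M \colon \Prop \to \rextreals$, and the multiplicative connective of $\pl$ plays no role in the semantics of an $\al$-formula. Hence $\M \models \delta$ holds in $\pl$ exactly when it holds in $\al$, for any $\al$-judgement $\delta$; in particular $\T \models \gamma$ carries over verbatim to $\pl$. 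This already delivers the ``semantic consequence'' half of the statement, so the whole content lies in showing that $\gamma$ remains \emph{unprovable} from $\T$ once we move to the larger system $\pl$.

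For that non-provability direction I would avoid any syntactic analysis and instead bounce through the semantics. Suppose, toward a contradiction, that $\gamma$ is provable from $\T$ in $\pl$. A proof is by definition a finite sequence of judgements, so it cites only finitely many members of $\T$; thus $\gamma$ is already provable from some finite subtheory $\T_0 \subseteq \T$ in $\pl$. By soundness of $\pl$ (Theorem~\ref{soundness}) this gives $\T_0 \models \gamma$. But $\T_0$ and $\gamma$ are $\al$-objects, so by the finite completeness of $\al$ established in~\cite{Bacci2023} the judgement $\gamma$ would be provable from the \emph{finite} set $\T_0$ in $\al$, and hence from $\T$ in $\al$ — contradicting the choice of $(\T,\gamma)$ as a witness of $\al$-incompleteness. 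Therefore $\gamma$ is not provable from $\T$ in $\pl$, which together with $\T \models \gamma$ is precisely the claimed incompleteness.

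The main obstacle is exactly this non-provability step: a priori the extra multiplication connective and the product rules $(P_1)$--$(P_9)$ could manufacture a $\pl$-derivation where $\al$ admitted none, so unprovability does \emph{not} transfer from $\al$ to $\pl$ by any purely syntactic inclusion. The argument sidesteps a delicate conservativity analysis of $\pl$ over $\al$ by relying only on two robust ingredients — that $\pl$-proofs are finitary and that $\pl$ is sound — and on the fact that $\al$-incompleteness is at bottom a failure of compactness of the $\rextreals$-semantics (no finite subtheory of $\T$ entails $\gamma$, as follows from $\al$'s finite completeness). Since such non-compactness is inherited by \emph{any} sound extension of $\al$, the incompleteness of $\pl$ is indeed an immediate consequence.
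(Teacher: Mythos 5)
Your argument is correct, but it takes a more abstract route than the paper. The paper's proof is self-contained: it exhibits the explicit witness $S=\{(n+1)p\vdash nq \mid n\in\naturals\}$, observes that $S\models p\vdash q$, and then, for each finite subset (only finitely many members of $S$ can occur in a proof), constructs a concrete countermodel ($\M(p)=\tfrac{k}{k+1}$, $\M(q)=1$), so that soundness of $\pl$ rules out provability. You share the same skeleton (finitariness of proofs plus soundness of $\pl$), but you replace the explicit countermodel construction by an appeal to the finite completeness of $\al$ from~\cite{Bacci2023}: if some finite $\T_0\subseteq\T$ semantically entailed $\gamma$, then $\gamma$ would already be $\al$-provable from $\T_0$, contradicting the choice of the witness. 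This correctly handles the one genuinely delicate point — that unprovability does not transfer syntactically from $\al$ to the larger system $\pl$ — and it works for \emph{any} black-box incompleteness witness and indeed for any sound finitary extension of $\al$, which is a pleasant gain in generality. The trade-off is that your proof leans on two imported facts from~\cite{Bacci2023} (the existence of the witness and finite completeness of $\al$), whereas the paper's version needs neither and also tells the reader what the failing theory actually looks like. Both proofs are valid; if you keep yours, state explicitly that the semantic-consequence relations of $\al$ and $\pl$ coincide on $\al$-judgements because the two logics share the same class of models, since that identification is what licenses applying $\al$'s completeness to the conclusion of $\pl$'s soundness.
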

\begin{proof}
Let $x,y\in \X$ and $S=\{ (n+1)x \vdash ny\mid ~n \in \naturals\}$. Note that $x \vdash y$ is a semantical consequence of $S$. All the proofs in $\pl$ are finite, hence if there exists a proof for $x\vdash y$ from $S$, there must exists $k \in \naturals$ such that the only judgements used
in the proof are $V = \{ (n+1)x \vdash y \mid 0 \leq n \leq k \}$. If this is the case, any model of $V$ is a model of $x \vdash y$ (from soundness). But this is false: consider $\M$ such that
$\M(x) = \frac{k}{k+1}$ and $\M(y) = 1$; then, for all $n \leq k$, $\M((n+1)x) \leq \M(y)$, but 
$\M(x) < \M(y)$. Hence, $x \vdash y$ is not provable from $S$.
\qed\end{proof}

However, $\pl$ is complete if we only consider finitely axiomatized theories.


\begin{theorem}[Finite Completeness]
\label{thm:completeness}
Let $S$ be a finite set of judgements in $\pl$.
If a judgement $\gamma$ is a semantic consequence of $S$, then $\gamma$ is provable from $S$:
\begin{align*}
  S \models \gamma 
  &&\text{implies} &&
  \frac{\; S \;}{\gamma} \,.
\end{align*}
\end{theorem}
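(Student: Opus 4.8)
The plan is to prove completeness directly: assuming $S \models \gamma$, I would construct an actual derivation of $\gamma$ from $S$. The architecture has two layers --- a \emph{reduction} that turns the judgement-level problem into finitely many arithmetic problems (infeasible systems of polynomial inequalities over the nonnegative reals), and an \emph{arithmetic} layer that discharges each such system by a Positivstellensatz certificate read back as a $\pl$ proof; the Disjunction Deduction Lemma (Lemma~\ref{totality}) is the glue that recombines the branches. First I would use that every judgement is provably equivalent to one of shape $\phi \vdash \psi$ and then canonicalise. Since canonicalisation is only \emph{fp}-provably equivalent to the original (Proposition~\ref{prop:fpProvabilityCF}), I would precede it by a case analysis over the finitely many propositional letters in $S$ and $\gamma$: for each letter $p$ the tautologies $\vdash \lnot p \lor \lnot\lnot p$ (\textsc{wem}) and $\vdash \lnot(p\bot)\lor\lnot\lnot(p\bot)$ (\textsc{wem} applied to $p\bot$) split, through Lemma~\ref{totality}, into the exhaustive cases $\M(p)=\infty$, $\M(p)=0$, and $\M(p)\in(0,\infty)$. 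Substituting $\bot$ or $0$ in the first two and simplifying, I am left with branches in which every remaining letter is finite and positive, so the hypotheses $\FP$ are available and each judgement may be replaced by its canonical form.

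Within such a finite-positive branch I would eliminate the non-polynomial connectives. The connectives $\land$, $\lor$, and the zero/nonzero alternatives of truncated subtraction are resolved by ordering case analysis: the totality tautology $\vdash(\alpha\lol\beta)\lor(\beta\lol\alpha)$ (\textsc{tot}) together with Lemma~\ref{totality} fixes which argument realises the $\max$/$\min$ (and whether an $\alpha\lol\beta$ collapses to $0$), recording the comparison as an auxiliary inequality. The residual $\lol$ occurrences are then pushed across the turnstile by the adjunction and quantale rules $(\ot_2)$, $(\lol_1)$--$(\lol_3)$ --- whose finiteness side-conditions are met in this branch --- turning a difference $\sem{\beta}-\sem{\alpha}$ into added terms on the opposite side of $\vdash$ and so keeping every side a nonnegative-coefficient polynomial. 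Because $S$ and $\gamma$ are finite there are finitely many subformulas, hence finitely many leaves; at each leaf every formula is in polynomial form, so by the dictionary of Lemma~\ref{polid} (propagated through Corollary~\ref{polidext}) a judgement $\theta\vdash\vartheta$ becomes the inequality $\sem{\theta}(x)\ge\sem{\vartheta}(x)$ between nonnegative-coefficient polynomials. This is the normal form promised by the reduction.

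Fix a leaf and let its branch assumptions $A$ (coming from $S$ and from the positivity/ordering splits, together with $x_i\ge 0$) give inequalities $g_1\ge 0,\dots,g_m\ge 0$, with target $g_0 := \sem{\phi}-\sem{\psi}\ge 0$. Because $S\models\gamma$, every model meeting $A$ satisfies $\gamma$, so $g_0\ge 0$ holds throughout the semialgebraic set $K=\{x : g_i\ge 0\}$ --- the strict positivity constraints may be relaxed to their closure by continuity, since $g_0\ge 0$ is non-strict. The Krivine--Stengle Positivstellensatz then supplies an algebraic certificate writing $g_0$ as a positive combination of products of the $g_i$. Reading this back, each atomic $g_i\ge 0$ is the arithmetic content of a branch judgement, and the three certificate operations --- adding two inequalities, multiplying one by a nonnegative-coefficient polynomial, and scaling by a nonnegative constant --- are realised in $\pl$ by $\ot$ on both sides of $\vdash$, by the multiplication rules $(P_1)$--$(P_9)$ with monotonicity, and by scalar multiplication; assembling them yields $\frac{S\cup A}{\gamma}$ in that branch. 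Iterating Lemma~\ref{totality} up the finite branching tree discharges all the \textsc{wem} and \textsc{tot} disjunctions and delivers $\frac{S}{\gamma}$.

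I expect the arithmetic-to-proof translation to be the main obstacle. A Positivstellensatz certificate natively involves sums of squares, whose expansions need not have nonnegative coefficients even when nonnegative on the orthant, whereas $\pl$ offers only nonnegative-coefficient polynomial formulas. Bridging this gap --- by invoking a nonnegative-coefficient (Handelman/P\'olya-type) certificate valid over the orthant, or by absorbing square terms as harmless weakenings on the larger side of a judgement --- is the delicate point, alongside the bookkeeping that keeps the branching finite and keeps each recorded ordering constraint provable within its own branch.
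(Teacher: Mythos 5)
Your overall architecture matches the paper's: a branching reduction (via \textsc{wem}, \textsc{tot} and Lemma~\ref{totality}) to judgements in polynomial form, followed by a Krivine--Stengle certificate read back as a $\pl$ derivation. The genuine gaps are in the arithmetic layer. First, you misstate what the Positivstellensatz delivers: nonnegativity of $g_0$ on $K=\{x: g_1\ge 0,\dots,g_m\ge 0\}$ does \emph{not} yield a representation of $g_0$ as a positive combination of products of the $g_i$; it yields an identity of the form $h_1 g_0 = g_0^{2s} + h_2$ with $h_1,h_2$ in the preordering generated by the $g_i$ and $s$ an integer. The multiplier $h_1$ (which may vanish on $K$) and the power $g_0^{2s}$ are precisely what makes the read-back nontrivial: the paper must first pass from $\vdash\rho_1(\theta\lol\vartheta)$ to $\vdash \rho_1\lor(\theta\lol\vartheta)$ by rule $(P_7)$, apply Lemma~\ref{totality}, and in the branch $\theta\lol\vartheta\vdash\rho_1$ invoke the Positivstellensatz a \emph{second} time (with separate handling of the cases $s>0$, $s=0$, $s<0$). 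Your ``add, multiply, scale'' certificate calculus presupposes a denominator-free representation that does not exist in general, so the central step of the argument is missing.

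Second, you correctly flag the sum-of-squares obstacle, but your proposed fixes would not work: Handelman/P\'olya-type certificates require compactness or strict positivity, and the square terms cannot be absorbed as weakenings since they occur multiplied into the identity rather than added to one side. The paper's actual resolution is to let $\rho_1,\rho_2$ be $\pl$ formulas \emph{not} in polynomial form, expressing the subtractions inside the squares with $\lol$, and then to clear all subtractions from both sides of the identity so as to land on a provable equality between genuine polynomial formulas (Lemma~\ref{polid}), transported back through Corollary~\ref{polidext} using the hypotheses $\theta_i\vdash\vartheta_i$ of $S$ to control the occurrences of $\lol$. Finally, your treatment of strict positivity by ``relaxing to the closure by continuity'' is unsound: the relaxed system can be feasible while the strict one is not (e.g.\ $-x\ge 0$ together with $x>0$), in which case no certificate for the relaxed system exists even though the consequence holds vacuously. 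The paper avoids this by encoding $\M(p)>0$ with a fresh letter $q$ via the non-strict judgement $q^2p\vdash\un$.
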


The proof plan is to reduce the statement above to the 
following restricted completeness theorem which applies only to judgements in polynomial form.
\begin{theorem}[Polynomial Completeness] \label{thm:completenessPoly}
Let $\gamma$ be a judgement and $S$ a finite set of judgements, both in polynomial form. Then,
\begin{align*}
  S \cup \FF \models \gamma 
  &&\text{implies} &&
  \frac{S \quad \FF}{\gamma} \,.
\end{align*}
\end{theorem}
Note that, $S \cup \FF \models \gamma$ represents a restricted form of semantical consequence where the models are assumed to be $\preals$-valued, which hereafter we denote by $S \models_{\preals} \gamma$.
We extend this notation to any $R \subseteq \extreals$, where $S \models_R \gamma$ denotes that every model $\M \colon \Prop \to R$ that satisfies $S$ satisfies also $\gamma$, for which we say that 
$\gamma$ is a \emph{$R$-semantical consequence} of $S$.

\smallskip
Before delving into the proof of Theorem~\ref{thm:completenessPoly} ---which constitutes the core of the completeness result--- we describe the reduction to it.

\medskip
The proposed reduction is characterised by a set of nondeterministic moves of the form
\begin{equation*}
	(S, \gamma) \longrightarrow (S_i, \gamma_i) \qquad \text{(for $i=1, \dots, k$)}
\end{equation*}
where $\gamma$, $\gamma_i$ are judgements and $S$, $S_i$ sets of judgements,
which have the following properties, called \emph{reliable} and \emph{nice}, respectively:
\begin{description}[fullwidth]
	\item[Reliable:] If $\gamma$ is a semantical consequence of $S$, then for all $i$, 
	$\gamma_i$ is a semantical consequence of $S_i$. In symbols:
	\begin{align*}
		S \models \gamma 
		&&\text{implies}&& 
		S_i \models \gamma_i  \quad \text{(for $i=1, \dots, k$)} \,;
	\end{align*}
	\item[Nice:] If for all $i$, $\gamma_i$ is provable from $S_i$, then $\gamma$ is provable from $S$. In symbols:
	\begin{align*}
		\frac{\; S_i \;}{\gamma_i}  \quad \text{(for $i=1, \dots, k$)} 
		&&\text{implies}&& 
		\frac{\; S \;}{\gamma} \,.
	\end{align*}
\end{description}
The reduction technique sketched above is similar to ``$\gamma$-reducibility'' of Adleman and Manders~\cite{AdlemanM79} (see also~\cite{ChungR89}), from which we borrowed our terminology.

We divide the reduction into four sets of moves, which are applied in the following order: (1) initialisation, (2) choice of domain, (3) reduction to canonical form, (4) reduction to proper canonical form; (5) reduction to polynomial form. 
The order of application is important for the correctness of the reduction.

\paragraph{Step 1 (Initialisation).}
The initialisation step has a single (deterministic) move
\begin{equation}
	(S,\phi \vdash \psi) \longrightarrow (S \cup \{p \vdash \phi, q \vdash \psi \}, p \vdash q)
	\tag{Init}
\end{equation}
where $p$, $q$ are fresh propositional variables not occurring $S$ and 
$\phi \vdash \psi$.
The intent of this move is to reduce the conclusion into a simplified canonical form 
---note that it is in polynomial. From this point onward, the conclusion $\gamma$ will be kept identical by every move. For this reason, abusing the notation, 
we define the next moves of the reduction without involving this component.

\paragraph{Step 2 (Choice of domain).}
The set of non-deterministic moves for this second step are given by 
\begin{align*}
	S &\longrightarrow S \cup \{\vdash |p|, q^2p \vdash 1 \}
	\tag{FP} \\
	S &\longrightarrow S[\bot/p]
	&& \text{(when $\vdash |p| \notin S$)}
	\tag{$\bot$} \\
	S &\longrightarrow S[0/p]
	&& \text{(when $q^2p \vdash 1 \notin S$)}
	\tag{0}
\end{align*}
where $p$, $q$ are propositional variables such that $p$ occurs in $S$ and $q$ is fresh in $S$.
In the above, $S[\phi/p]$ denotes substitution of a propositional variable $p$ for 
a formula $\phi$ in all the judgements of $S$.

Intuitively, (FP) non-derministically choose a propositional variable $p$ to be assumed finite and strictly positive ($x > 0$ iff exists $y$ such that $y^2 x \geq 1$). The moves ($\bot$) and ($0$) correspond,  respectively,
to deciding whether $p$ is infinite or zero. 
Observe that the rules can be applied in sequence until no more moves are available. 
The conditions of application imposed on the moves ($\bot$) and ($0$) makes sure that
the choice of domains for the propositional variables are coherent along a computation path. Due to the nondeterminism, all choices of domains are possible 
for the propositional variables in $S$.

\paragraph{Step 3 (Reduction to CF).} 
The reduction to canonical form consists of a single (determistic) move
\begin{align*}
	S &\longrightarrow S^{\sf cf} \cup \FF
	\tag{CF}
\end{align*}
ensuring that all judgements in $S$ are canonicalised. Although we already introduced the judgements of type $\vdash |p|$ in the previous step, taking the union with $\FF$ is necessary to prove that the move is nice. Indeed, after the canonicalisation, the judgement $\vdash |p|$ becomes trivially valid: $\vdash |p|^{\sf cf} = {\vdash 0}$.

\paragraph{Step 4 (Reduction to PCF).}
After the previous step of canonicalisation, the only judgements that are not in proper canonical form are either trivially valid ($\bot \vdash \phi$) or finitarily unsatisfiable ($\theta \vdash \bot$). The following moves keep their meaning when used as hypotheses 
but rewrite them in proper canonical form:
\begin{align*}
	S &\longrightarrow S \setminus \{ \bot \vdash \phi \mid \phi \in \pl \} 
	\tag{Valid} \\
	S &\longrightarrow \textit{Unsat}(S) 
	\tag{Unsat}
\end{align*}
where $\textit{Unsat}(S)$ is obtained from $S$ by replacing all the occurrences of
judgements of the form $\theta \vdash \bot$ with $1 \vdash 0$. Note that also 
$1 \vdash 0$ is unsatisfiable, but it is in polynomial form.

\paragraph{Step 5 (Reduction to Polynomial Form).} 
Recall that a formula is in polynomial form if it is in PCF and does not have occurrences
of $\lol$. The first requirement is guaranteed by the previous step. The moves given below
are designed to sequentially eliminate all the occurrences of $\lol$ inside a judgement:
{\small
	\begin{align*}
		\tag{$\ot$-L1}
		S \cup \{ \rho \ot (\vartheta \lol \gamma) \vdash \psi \} 
		&\longrightarrow \{ \rho \vdash \psi, \vartheta \vdash \gamma \} \cup S
		\\
		\tag{$\ot$-L2}
		S \cup  \{ \rho \ot (\vartheta \lol \gamma) \vdash \psi \} 
		&\longrightarrow \{ \rho\ot\gamma \vdash \psi\ot\vartheta, \gamma \vdash \vartheta \} \cup S
		\\
		\tag{$\ot$-R1}
		S \cup  \{ \phi \vdash \rho \ot (\vartheta \lol \gamma) \} 
		&\longrightarrow \{ \phi \vdash \rho, \vartheta \vdash \gamma \} \cup S
		\\
		\tag{$\ot$-R2}
		S \cup \{ \phi \vdash \rho \ot (\vartheta \lol \gamma) \}  
		&\longrightarrow \{ \phi\ot\vartheta \vdash \rho\ot\gamma, \gamma \vdash \vartheta \} \cup S
		\\
		\tag{m-L1}
		S \cup \{ \rho (\vartheta \lol \gamma) \vdash \psi \} 
		&\longrightarrow \{ \vdash \psi, \vartheta \vdash \gamma \} \cup S
		\\
		\tag{m-L2}
		S \cup \{ \rho (\vartheta \lol \gamma) \vdash \psi \} 
		&\longrightarrow \{ \rho\gamma \vdash \rho\vartheta \ot \psi, \gamma \vdash \vartheta \} \cup S
		\\
		\tag{m-R1}
		S \cup \{ \phi \vdash \rho (\vartheta \lol \gamma) \} 
		&\longrightarrow \{ \vartheta \vdash \gamma \} \cup S
		\\
		\tag{m-R2}
		S \cup \{ \phi \vdash \rho (\vartheta \lol \gamma) \}  
		&\longrightarrow \{ \phi\ot\rho\vartheta \vdash \rho\gamma, \gamma \vdash \vartheta \} \cup S
	\end{align*}
}
We assume the rules apply up to commutativity for both $\ot$ and multiplication.

\begin{proposition} \label{pro:soundnessMoves}
	The moves of the reduction are reliable and nice.
\end{proposition}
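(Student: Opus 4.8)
The plan is to prove the proposition by a \emph{move-by-move} case analysis, establishing the two properties separately for each of the moves of Steps 1--5. For every move, reliability is a purely semantic statement that I would verify by unfolding the explicit operations of the Lawvere quantale ($+$ for $\ot$, $\dotdiv$ for $\lol$, product for multiplication), whereas niceness is the genuinely proof-theoretic part, carried out inside the deduction system of Table~\ref{tab:rules}. Two observations organise the whole argument. First, the moves that introduce fresh letters or substitute values are handled by \emph{substitution} in proofs: freshness lets me extend or restrict models freely (for reliability) and lets me replace the introduced letters to turn the recorded hypotheses into instances of (\textsc{id}) (for niceness). Second, every move that \emph{branches} into several targets $(S_i,\gamma_i)$ mirrors a semantic case split, and its niceness is obtained from the Disjunction Deduction Lemma (Lemma~\ref{totality}) instantiated with a provable disjunction coming from (\textsc{tot}) or (\textsc{wem}).

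For the deterministic moves I would argue as follows. For \textsc{(Init)}, reliability chains the two recorded inequalities with $S\models\phi\vdash\psi$, while niceness substitutes the fresh atoms $p,q$ back to $\phi,\psi$, so that the recorded hypotheses collapse to instances of (\textsc{id}) and the proof of the fresh goal becomes a proof of $\phi\vdash\psi$ from $S$. For \textsc{(CF)} I would use that, after Step~2, every surviving propositional letter is constrained to be finite and positive; hence Proposition~\ref{prop:cfSemantic} gives $\M(\phi)=\M(\phi^{\sf cf})$ on the relevant models (reliability), and Propositions~\ref{prop:fpProvabilityCF}/\ref{prop:fProvabilityCF} give the two-way fp/f-provable equivalence between each judgement and its canonicalisation (niceness), the added copy of $\FF$ being exactly what is needed to re-derive the canonicalised finiteness hypotheses. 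The Step~4 moves \textsc{(Valid)} and \textsc{(Unsat)} are immediate: $\bot\vdash\phi$ is a tautology, so deleting it changes neither the model class nor, by weakening, provability; and $\theta\vdash\bot$ is replaced by a polynomial-form judgement with the same (un)satisfiability status that is provably interchangeable with it in the current context.

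The branching moves are the heart of the proof. In Step~2 each letter $p$ occurring in $S$ is split three ways --- finite-and-positive \textsc{(FP)}, infinite $(\bot)$, or zero $(0)$ --- which exhausts the possible values of $\M(p)$ in $[0,\infty]$. Reliability is monotone (adding the \textsc{(FP)} hypotheses only shrinks the model class) or uses the model correspondence for the substitutions $S[\bot/p]$, $S[0/p]$. For niceness I would apply Lemma~\ref{totality} twice: (\textsc{wem}) splits on $\lnot p\lor\lnot\lnot p$ (infinite versus finite) and, inside the finite branch, a positivity disjunction splits zero versus positive; under the infinite (resp.\ zero) branch hypothesis the atom $p$ is provably equal to $\bot$ (resp.\ to $0$), which legitimises the substitutions. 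The side-conditions ``$\vdash\lnot\lnot p\notin S$'' and ``$q^2 p\vdash\un\notin S$'' are precisely what guarantees that each letter is split only once, so that these disjunctions are applied coherently. In Step~5 each occurrence of $\lol$ is eliminated by a two-way branch that mirrors the two cases in the definition of $\dotdiv$: when $\vartheta\vdash\gamma$ holds, $\M(\vartheta\lol\gamma)=0$ and the offending subformula simplifies away (the \textsc{L1}/\textsc{R1} moves); when $\gamma\vdash\vartheta$ holds, $\M(\vartheta\lol\gamma)=\M(\gamma)-\M(\vartheta)$ and the judgement is rearranged by moving the subtraction to the other side of $\vdash$ (the \textsc{L2}/\textsc{R2} moves). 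Reliability is the soundness of these arithmetic rearrangements (Theorem~\ref{soundness}), and niceness again merges the two branches with Lemma~\ref{totality} applied to the (\textsc{tot}) disjunction $\vdash(\vartheta\lol\gamma)\lor(\gamma\lol\vartheta)$.

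The main obstacle I anticipate is the niceness of the \emph{multiplicative} $\lol$-elimination moves \textsc{(m-L1)}--\textsc{(m-R2)}. Unlike $\ot$, multiplication does not interact with $\lol$ through a single adjunction, so rewriting $\rho(\vartheta\lol\gamma)$ requires the distributivity rule $(P_4)$ together with the degenerate cases $(P_6)$ (multiplication by $0$) and $(P_8)$ (multiplication by $\bot$), and the rearrangement of the subtraction in the \textsc{L2}/\textsc{R2} branches relies on the conditional adjunction rules $(\lol_2)$, $(\lol_3)$ whose side-conditions demand finiteness of the relevant formulas. This is exactly why the moves must be applied in the prescribed order: only because Step~2 has already pinned every remaining atom to a finite, positive value --- so that all polynomial subformulas are finite --- are these finiteness side-conditions discharged and the truncated subtraction $\dotdiv$ forced to behave like ordinary subtraction. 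Verifying that these rules indeed suffice to reconstruct the original judgement from each branch, uniformly across the $\ot$- and multiplication-, the left- and right-, and the scalar cases, is where the bulk of the proof-theoretic work lies.
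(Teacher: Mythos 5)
Your proposal is correct and follows essentially the same route as the paper's proof: reliability checked semantically move by move, and niceness established per step via substitution of the fresh letters for \textsc{(Init)}, the Disjunction Deduction Lemma (Lemma~\ref{totality}) with the (\textsc{wem})/(\textsc{tot}) disjunctions for the branching Steps 2 and 5, and the fp-provable equivalence with canonical forms (Proposition~\ref{prop:fpProvabilityCF}, using the positivity judgements secured in Step 2 via $(P_8)$) for Step 3. Your treatment is in fact somewhat more detailed than the paper's, particularly on the multiplicative $\lol$-elimination cases.
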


Now we are ready to prove our main theorem:
\begin{proof}[proof sketch of Theorem~\ref{thm:completenessPoly}]
Let $\gamma = \theta \vdash \vartheta$ be a polynomial judgement and 
$S = \{ \theta_1 \vdash \vartheta_1, \dots, \theta_n \vdash \vartheta_n \}$ a finite set of
polynomial judgements, all over the variables $p_1, \dots, p_m$. Assume that $S \models_{\preals} \gamma$.

$\sem{\theta}$, $\sem{\vartheta}$, $\sem{\theta_i}$, $\sem{\vartheta_i}$, 
$\sem{p_j}$ are polynomials in $\reals[X_1, \dots, X_m]$, where $X_j$ is the polynomial variable corresponding to $p_j$. 
The $\preals$-valued models of $S$ are the solutions of the following 
system of polynomial inequalities 
\begin{align*}
	\left\{\begin{aligned}
		\sem{\theta_i} - \sem{\vartheta_i} &\geq 0 &\text{(for $i = 1,\dots, n$)}\\
		\sem{p_j} &\geq 0 &\text{(for $j = 1,\dots, m$)}
	\end{aligned}\right.
\end{align*}
The hypothesis $S \models_{\preals} \gamma$ guarantees that all the solutions 
of the system above satisfy the inequality $\sem{\theta}-\sem{\vartheta} \geq 0$.
Let $\Sigma[X_1,\dots, X_m]$ denote the sum-of-squares polynomials in the 
variables $X_1, \dots, X_m$. By the Krivine-Stengle Positivstellensatz, there exist polynomials $h_1, h_2$ each of type 
\begin{equation*}
	\sum_{\alpha\in\{0,1\}^{n+m}} \sigma_\alpha 
	\Big(\prod_{i = 1}^n (\sem{\theta_i} - \sem{\vartheta_i})^{\alpha_i} \Big) 
	\Big(\prod_{j = 1}^m \sem{p_j}^{\alpha_{n+j}} \Big)
\end{equation*}
for some $\sigma_\alpha\in\Sigma[X_1,\dots,X_m]$ 
and integer $s$ such that
\begin{equation*}
	h_1\sem{\theta}=h_1 \sem{\vartheta}+(\sem{\theta}- \sem{\vartheta})^{2s}+h_2 \,.
\end{equation*}
The first step is to find formulas $\rho_1$, $\rho_2$ such that:
	\begin{equation}
		\label{eq01}
		\frac{S \quad \FF}{
			\vdash \rho_1\theta \lollol \rho_1\vartheta \ot (\vartheta \lol \theta)^{2s} \ot \rho_2} \,.
	\end{equation}
Note that, under the conditions of the finiteness $\FF$ of the propositional variables we can deduce both 
$\vdash |\rho_1|$ and $\vdash |\rho_2|$ (finiteness of $\rho_1$ and $\rho_2$).

From \eqref{eq01}, the main thesis follows by splitting the proof into three sub-cases: $s = 0$; $s > 0$; and $s < 0$. The first case is the simplest, and the proof follows by syntactical manipulation. The other two cases are significantly more involved, both requiring the use of the weak form of Krivine-Stengle Positivstellensatz.
\qed\end{proof}

Theorem~\ref{thm:completeness} has two simple corollaries regarding approximated completeness that could be relevant from a computational perspective.

\begin{corollary}\label{approxlemma}
Let $S$ be a finite set of judgements and $\phi \vdash \psi$ a polynomial judgement. 
If there exists $\e>0$ such that $S \models \phi \vdash \psi \ot \e$, then 
  $\frac{S}{\phi\vdash\psi}$.
\end{corollary}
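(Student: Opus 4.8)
The plan is to reduce the statement to the finite completeness theorem (Theorem~\ref{thm:completeness}) after observing that the hypothesis, although phrased as an \emph{approximate} separation with positive margin $\e$, is in fact strictly stronger than plain semantic consequence. First I would show that $S \models \phi \vdash \psi \ot \e$ entails $S \models \phi \vdash \psi$; then, since $S$ is finite, completeness immediately yields $\frac{S}{\phi \vdash \psi}$.

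For the key implication, recall that $\phi \vdash \psi$ is satisfied by a model $\M$ precisely when $\M(\phi) \geq \M(\psi)$, and that $\M(\psi \ot \e) = \M(\psi) + \M(\e) = \M(\psi) + \e$, using $\M(\e) = \M(\e * \un) = \e$. As $\e > 0$ and addition on $\rextreals$ is monotone, we have $\M(\psi) + \e \geq \M(\psi)$ for every $\M$; this inequality holds uniformly, including the degenerate case $\M(\psi) = \infty$, where both sides equal $\infty$. Hence any model $\M \models S$ with $\M(\phi) \geq \M(\psi \ot \e)$ also satisfies $\M(\phi) \geq \M(\psi)$, that is, $\M \models \phi \vdash \psi$. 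Since this holds for every model of $S$, we obtain $S \models \phi \vdash \psi$.

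Applying Theorem~\ref{thm:completeness} to the finite set $S$ and the judgement $\phi \vdash \psi$ then gives $\frac{S}{\phi \vdash \psi}$, completing the argument. I do not expect any genuine obstacle here: the only point to verify is the uniform monotonicity estimate above, and in particular that it survives the passage to infinite distances, after which the finiteness of $S$ lets completeness do the rest. It is worth noting that this route does not actually exploit the polynomial form of $\phi \vdash \psi$; that hypothesis seems to reflect the intended application context rather than a need of the proof. As an alternative, one could instead apply Theorem~\ref{thm:completeness} directly to the stronger judgement $\phi \vdash \psi \ot \e$ and then cut against the tautology $\psi \ot \e \vdash \psi$ (valid since $\M(\psi \ot \e) = \M(\psi) + \e \geq \M(\psi)$) to descend to $\phi \vdash \psi$; both routes close the proof equally quickly.
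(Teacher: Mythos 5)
Your argument is correct and matches the paper's intent: the paper states this result as a ``simple corollary'' of Theorem~\ref{thm:completeness} with no further proof, and your observation that $\M(\psi \ot \e) = \M(\psi) + \e \geq \M(\psi)$ (uniformly, including at $\infty$) reduces the hypothesis to plain semantic consequence $S \models \phi \vdash \psi$, after which finite completeness closes the argument exactly as intended. Your side remark that the polynomial-form hypothesis plays no role in this derivation is also accurate.
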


\begin{corollary}[Approximated Completennes]
\label{approxcompleteness}
Let $S$ be a finite set of judgements. If a judgement $\phi\vdash\psi$ is a semantic consequence of $S$ in $\pl$, \ie, $S \models \phi\vdash\psi$, then for any $\delta>0$, 
$\frac{S}{\phi\ot\delta\vdash\psi}$.
\end{corollary}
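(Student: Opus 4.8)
The plan is to derive this directly from the finite completeness theorem (Theorem~\ref{thm:completeness}): it suffices to show that $\phi \ot \delta \vdash \psi$ is itself a semantic consequence of $S$, whereupon completeness upgrades this to provability. Thus the whole argument reduces to a single semantic observation, namely that enlarging the antecedent of a judgement by a non-negative constant can only weaken it.

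Concretely, I would first fix an arbitrary model $\M$ of $S$. By hypothesis $S \models \phi \vdash \psi$, so $\M(\phi) \geq \M(\psi)$. Recalling the convention that $\delta$ abbreviates $\delta * \un$, we have $\M(\delta) = \delta$ and hence $\M(\phi \ot \delta) = \M(\phi) + \delta$. Since $\delta > 0$ and addition on $\extreals$ is monotone (with the convention $\infty + \delta = \infty$ covering the case $\M(\phi) = \infty$), it follows that $\M(\phi \ot \delta) = \M(\phi) + \delta \geq \M(\phi) \geq \M(\psi)$, i.e.\ $\phi \ot \delta \models_\M \psi$. As $\M$ was an arbitrary model of $S$, this establishes $S \models \phi \ot \delta \vdash \psi$.

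Finally, I would apply Theorem~\ref{thm:completeness} to the finite set $S$ and the judgement $\phi \ot \delta \vdash \psi$, obtaining $\frac{S}{\phi \ot \delta \vdash \psi}$, which is exactly the claim. I expect no genuine obstacle here: the statement is a direct corollary of completeness, and the only thing to verify is the elementary monotonicity above, whose single subtlety is the uniform treatment of models with $\M(\phi) = \infty$, handled by the arithmetic conventions on $\extreals$. Note that the hypothesis $\delta > 0$ is in fact stronger than required, as any $\delta \geq 0$ makes the same argument go through.
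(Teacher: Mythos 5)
Your proof is correct and matches the intended argument: the paper presents this as an immediate ``simple corollary'' of Theorem~\ref{thm:completeness} with no separate proof, and your derivation --- in every model $\M$ of $S$ one has $\M(\phi \ot \delta) = \M(\phi) + \delta \geq \M(\phi) \geq \M(\psi)$, hence $S \models \phi \ot \delta \vdash \psi$, and finite completeness then yields provability --- is exactly that one-line reduction. Your side remark that $\delta \geq 0$ already suffices is also accurate.
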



\section{Complexity Results} 
\label{sec:complexity}

In this section, we present complexity bounds for standard decidability problems, specifically, satisfiability for a finite set of judgements and semantic consequence. We provide these results for both $\pl$ and its fragment $\al$.

The \emph{satisfiability problem} for a finite set of judgments $S$ is the problem of determining whether there exists a model $\M$ such that $\M \models S$. Next, we investigate its complexity for $\al$ and $\pl$, simultaneously.

\smallskip
Below, we describe a non-deterministic procedure for transforming a set of judgments in
proper canonical form (PCF) into a $\preals$-equi-satisfiable set of judgements in polynomial form (resp.\ affine form).

The algorithm is defined by a finite set of nondeterministic moves 
\begin{equation*}
  S \longrightarrow S_i  \qquad \text{(for $i=1,\dots,k$)}
\end{equation*}
between set of judgements $S$, $S_i$, such that 
\begin{itemize}[topsep=1ex]
  \item \emph{(Form preservation)} if $S$ is in PCF, so is $S_i$;
  \item \emph{(Soundness)} $S$ is $\preals$-satisfiable iff one of the $S_i$ is $\preals$-satisfiable.
\end{itemize}
Each move $S \longrightarrow S'$ eliminates an occurrence of $\lol$ from the judgements in $S$. The algorithm repeatedly applies the moves until it reaches a terminal configuration, which, by design, must be a set of judgements in polynomial form.


We obtain efficient moves by introducing fresh variables and allowing duplications only for these. As each move removes exactly one occurrence of $\lol$, the duplication of a propositional variable does not duplicate the number of moves.

The set of efficient moves for the elimination of $\lol$ are
{\small
\begin{align*}
  \tag{$\ot$-L1*}
  S \cup \{ \rho \ot (\vartheta \lol \theta) \vdash \psi \} 
  &\to S \cup \{ \rho \vdash \psi, \vartheta \vdash \theta \}
  \\
  \tag{$\ot$-L2*}
  S \cup \{ \rho \ot (\vartheta \lol \theta) \vdash \psi \} 
  &\to S \cup \{
  	\rho\ot p \vdash \psi\ot q, 
	\theta \vdash p, p \vdash q, q \vdash \vartheta
  \}
  \\
  \tag{$\ot$-R*}
  S \cup \{ \phi \vdash \rho \ot (\vartheta \lol \theta) \} 
  &\to S \cup \{ \phi \vdash \rho\ot p, p \ot\vartheta \vdash \theta \}
  \\
  \tag{m-L1*}
  S \cup \{ \rho (\vartheta \lol \theta) \vdash \psi \} 
  &\to S \cup \{ \vdash \psi, \vartheta \vdash \theta \}
  \\
  \tag{m-L2*}
  S \cup \{ \rho (\vartheta \lol \theta) \vdash \psi \} 
  &\to S \cup \{
  	pq \vdash pr \ot \psi, \rho \vdash p, 
	\theta \vdash q, q \vdash r, r \vdash \vartheta
  \}
  \\
  \tag{m-R*}
  S \cup \{ \phi \vdash \rho (\vartheta \lol \gamma) \} 
  &\to S \cup \{ \phi \vdash \rho p, p \ot\vartheta \vdash \gamma \}
\end{align*}
}
where $p,q,r$ are propositional variables fresh in $\phi, \psi, \rho, \vartheta, \theta$, and $S$.

\begin{proposition} \label{prop:EffLOLelimination}
The moves for the elimination of $\lol$ are sound.
\end{proposition}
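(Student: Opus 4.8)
The plan is to prove soundness of each of the six efficient moves by verifying the two required properties separately: \emph{form preservation} (immediate, since each move produces judgements whose only new syntactic material consists of fresh propositional letters combined via $\ot$ and multiplication, never $\lol$) and \emph{soundness} (the f-satisfiability equivalence). Form preservation is a routine syntactic inspection: in every move, the eliminated $\lol$ is replaced by judgements that are themselves in PCF, and the auxiliary letters $p,q,r$ occur only inside tensor products or products, so if the surrounding formulas $\rho,\vartheta,\theta,\psi,\phi$ are in PCF then so is the resulting set. I would dispatch this in one or two sentences.

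The real content is the f-satisfiability equivalence, which I would check move-by-move using the semantics of judgements, namely that $\M$ satisfies $\phi \vdash \psi$ iff $\M(\phi) \geq \M(\psi)$, together with $\M(\vartheta \lol \theta) = \M(\theta) \dotdiv \M(\vartheta)$. The guiding idea for the starred moves is that a fresh letter acts as an existentially quantified slack variable pinned to the value of a subformula: for instance in $(\ot\text{-R*})$, the constraint $\phi \vdash \rho \ot p$ together with $p \ot \vartheta \vdash \theta$ encodes, for the witnessing value $\M(p)$, exactly that $\M(p) \geq \M(\theta)\dotdiv\M(\vartheta)$ is achievable while $\M(\phi) \geq \M(\rho)+\M(p)$, which matches $\M(\phi)\geq \M(\rho)+(\M(\theta)\dotdiv\M(\vartheta))=\M(\rho\ot(\vartheta\lol\theta))$. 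For each direction I would argue: from a model of the left-hand set, define the fresh letters by the appropriate truncated differences (e.g.\ $\M(p)=\M(\theta)\dotdiv\M(\vartheta)$) and verify the new judgements hold; conversely, from a model of a right-hand set, observe that the chain of constraints on the fresh letters forces their values to bound the relevant truncated difference, so the original $\lol$-judgement is recovered.

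The subtle cases are the disjunctive ones, where a single $\lol$-judgement splits into two alternative moves (the L1*/L2* pairs). Here the equivalence is not a plain iff for a fixed move but the ``iff one of the $S_i$ is f-satisfiable'' condition, and the split corresponds precisely to a case analysis on whether the truncated subtraction $\M(\theta)\dotdiv\M(\vartheta)$ is zero or strictly positive --- exactly the totality of the order in the Lawvere quantale that rule $(\textsc{tot})$ reflects. For $(\ot\text{-L1*})$ versus $(\ot\text{-L2*})$, the antecedent $\rho\ot(\vartheta\lol\theta)$ evaluates differently in the two regimes $\M(\vartheta)\geq\M(\theta)$ and $\M(\vartheta)<\M(\theta)$; the first branch collapses $\vartheta\lol\theta$ to $0$ (recorded by $\vartheta\vdash\theta$), while the second keeps it as a genuine positive difference (recorded by the chain $\theta\vdash p\vdash q\vdash\vartheta$ forcing the ordering and the tensor bookkeeping $\rho\ot p\vdash\psi\ot q$). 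I expect the bookkeeping in these L2* chains to be the main obstacle: one must check that the chain of inequalities on fresh letters leaves enough freedom to witness satisfiability in exactly the intended regime and no other, so that the disjunction over moves is genuinely exhaustive and mutually compatible with the case split. Once this is verified, the overall soundness of the procedure follows by noting that a terminal (polynomial-form) configuration is reached since each move strictly decreases the number of $\lol$ occurrences.
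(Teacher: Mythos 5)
The paper states this proposition without any proof at all, so there is nothing to compare against; your plan supplies exactly the argument the authors leave implicit, and it is correct. The two key points you identify are the right ones: the fresh letters are existentially quantified slack variables (for the R$*$ moves one sets $\M(p)=\M(\theta)\dotdiv\M(\vartheta)$; for the L2$*$ chains the natural witnesses are $\M(p)=\M(\theta)$, $\M(q)=\M(\vartheta)$, resp.\ $\M(p)=\M(\rho)$, $\M(q)=\M(\theta)$, $\M(r)=\M(\vartheta)$, which your ``chain of constraints bounds the truncated difference'' phrasing covers), and the L1$*$/L2$*$ pair realises the case split $\M(\vartheta)\geq\M(\theta)$ versus $\M(\vartheta)<\M(\theta)$ needed for the ``$S$ is f-satisfiable iff \emph{one of} the $S_i$ is'' formulation. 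The only detail worth making explicit when writing it up is that at this stage all judgements are in PCF and models are finite, so every value is a non-negative real and the $\dotdiv$ manipulations reduce to ordinary arithmetic; with that observed, the move-by-move verification goes through as you describe.
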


Based on this algorithm, we can propose a non-deterministic procedure 
for deciding the satisfiability of a finite set of judgements both in $\al$ and $\pl$.

We start by establishing the complexity of deciding satisfiability in $\al$.
\begin{theorem} \label{thm:SatALNP-compl}
Satisfiability in $\al$ is NP-complete.
\end{theorem}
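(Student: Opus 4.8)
The plan is to obtain NP-completeness by pairing the membership result already proved in Theorem~\ref{thm:SatALinNP} with a proof of NP-hardness. For the latter I would give a polynomial-time (indeed linear-time) many-one reduction from $3$-SAT (equivalently, CNF-SAT). The guiding observation is that, although feasibility of a single system of affine inequalities is polynomial, the satisfiability problem over the Lawvere quantale forces a combinatorial choice of ``truth values'' for the propositional letters, and the derived connectives $\land$ (interpreted as $\max$) and $\lor$ (interpreted as $\min$), together with the truncated subtraction hidden in $\lol$, are exactly what let us encode Boolean disjunction. All formulas used below lie in $\al$, since none involves multiplication.

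Given a CNF formula over Boolean variables $x_1,\dots,x_k$ with clauses $C_1,\dots,C_m$, I would build a finite set of $\al$-judgements as follows. For each $x_i$ I introduce a propositional letter (also written $x_i$) together with the two judgements
\[
\un \vdash x_i \qquad\text{and}\qquad \vdash x_i \lor (\un \lol x_i).
\]
The first forces $\M(x_i) \leq 1$; the second, recalling that a bare $\vdash \psi$ is satisfied iff $\M(\psi)=0$, forces $\min\{\M(x_i),\, 1 \dotdiv \M(x_i)\} = 0$, i.e. $\M(x_i)=0$ or $\M(x_i)\geq 1$. Together they pin $\M(x_i)\in\{0,1\}$, which I read as $1=\text{true}$, $0=\text{false}$. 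For a clause $C_j = \ell_{j,1}\lor\ell_{j,2}\lor\ell_{j,3}$ I set $\hat\ell := x_i$ when the literal is $x_i$ and $\hat\ell := \un\lol x_i$ when it is $\lnot x_i$; note that on a $\{0,1\}$-valued model $\M(\un\lol x_i)=1\dotdiv\M(x_i)$ is precisely the truth value of $\lnot x_i$. I then add the clause judgement
\[
\hat\ell_{j,1} \land \hat\ell_{j,2} \land \hat\ell_{j,3} \vdash \un ,
\]
whose semantics is $\max_r \M(\hat\ell_{j,r}) \geq 1$, i.e. at least one literal of $C_j$ is true.

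Correctness follows in both directions: any model satisfying the variable-forcing judgements is $\{0,1\}$-valued, hence induces a Boolean assignment, and the clause judgements are satisfied exactly when every clause has a true literal; conversely a satisfying assignment yields the corresponding $\{0,1\}$-valued model. The reduction emits $O(k+m)$ judgements each of constant size, so it is computable in polynomial (linear) time, establishing NP-hardness; with Theorem~\ref{thm:SatALinNP} this gives NP-completeness. I expect the only delicate point to be the verification of the encoding itself, namely checking that the two forcing judgements genuinely cut the domain down to $\{0,1\}$ and that the clause judgement faithfully captures disjunction. This requires care with the reversed order of the Lawvere quantale (where $0$ is top and $\infty$ bottom), with the meaning of the bare turnstile ($\vdash\psi$ asserting $\M(\psi)=0$) and of $\psi\vdash\un$ (asserting $\M(\psi)\geq 1$), and with the clamping behaviour of $\dotdiv$ in $\lol$; once these semantics are correctly tracked, the equivalence between satisfying assignments and satisfying models is routine.
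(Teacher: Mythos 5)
Your membership argument is exactly the paper's (cite Theorem~\ref{thm:SatALinNP}), but your hardness reduction takes a genuinely different route. The paper's reduction is a one-liner: given a Boolean formula $\varphi$ over $\lnot,\land,\lor$, prefix every propositional letter with $\lnot\lnot$ and ask whether the single judgement $\vdash \varphi^{\lnot\lnot}$ is satisfiable. Since $\M(\lnot\lnot p)\in\{0,\infty\}$ with $0$ playing ``true'', and $\land,\lor,\lnot$ restrict to the Boolean connectives on $\{0,\infty\}$, no variable-forcing judgements are needed at all. Your reduction instead pins the letters to $\{0,1\}$ with $1$ as ``true'' and uses one judgement per variable pair plus one per clause; it is more elaborate but equally legitimate in principle, and arguably more transparent about where the combinatorial hardness lives.

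However, as written your encoding has a concrete error in the direction of $\lol$. The paper defines $\M(\phi\lol\psi)=\M(\psi)\dotdiv\M(\phi)$, so $\M(\un\lol x_i)=\M(x_i)\dotdiv 1$, \emph{not} $1\dotdiv\M(x_i)$. Two things then break. First, given $\un\vdash x_i$ (i.e.\ $\M(x_i)\leq 1$) we have $\M(\un\lol x_i)=0$ automatically, so your second forcing judgement $\vdash x_i\lor(\un\lol x_i)$ is vacuous and does not exclude values strictly between $0$ and $1$. Second, every negative literal $\hat\ell=\un\lol x_i$ evaluates to $0$ (``false'' in your convention) on all admissible models, so for instance the satisfiable clause $\lnot x_1$ is mapped to the unsatisfiable judgement $(\un\lol x_1)\vdash\un$. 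The fix is simply to write $x_i\lol\un$ throughout, whose value is $1\dotdiv\M(x_i)$ as you intended; with that correction the forcing judgements do pin $\M(x_i)\in\{0,1\}$, the clause judgements $\hat\ell_{j,1}\land\hat\ell_{j,2}\land\hat\ell_{j,3}\vdash\un$ (recalling $\land$ is $\max$) faithfully encode Boolean disjunction, and the reduction is correct and linear-time.
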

NP-hardness follows from a reduction of SAT and NP-membership via the non-deterministic procedure described above, which turns satisfiability in $\al$ into the feasibility checking of affine inequalities (solved  
by using Ellipsoid method~\cite{Khachiyan1980}).

Next, we state the complexity of deciding satisfiability in $\pl$. 
Observe that as $\al$ is a sublogic of $\pl$, by Theorem~\ref{thm:SatALNP-compl}, 
the problem is at least NP-hard. 
\begin{theorem}\label{SATpl-PSPACE}
Satisfiability in $\pl$ is in PSPACE.
\end{theorem}
The proof is via the same non-deterministic procedure, which reduces the problem to the
satisfiability of a formula in the existential theory of the real numbers, which can be 
decided in PSPACE~\cite{Canny88}.


The \emph{semantical consequence} from a set of judgements $S$ is the problem of determining whether a model $\M$ that satisfies all the judgements of $S$ does also satisfy a given judgement $\gamma$, called consequent.

Next, we state the complexity of this problem both in $\al$ and $\pl$.

\begin{theorem} \label{thm:cosequenceAPisCoNP}
Semantical consequence in $\al$ is co-NP complete.
\end{theorem}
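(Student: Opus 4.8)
The plan is to show that the complement problem---deciding \emph{non}-consequence---is in NP, which immediately yields that semantical consequence in $\al$ is in co-NP. Given a finite set $S$ of judgements and a consequent $\gamma = \phi \vdash \psi$, by definition $S \not\models \gamma$ precisely when there exists a model $\M$ such that $\M \models S$ but $\M \not\models \gamma$, i.e.\ $\M$ satisfies every judgement in $S$ while violating $\gamma$. The negation of the judgement $\phi \vdash \psi$ is the condition $\M(\phi) < \M(\psi)$, which (over finitary models, after splitting off the infinite/zero cases) is expressible as a \emph{strict} polynomial inequality. Thus a nondeterministic algorithm need only guess a witnessing model (more precisely, guess the qualitative data determining the model and then certify feasibility of the associated numerical system) and verify it in polynomial time.

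Concretely, I would reuse the machinery already developed for satisfiability. First I would guess a subset of the propositional letters to be finite (replacing the rest by $\bot$) and canonicalise, exactly as in the proof of Theorem~\ref{thm:SatALinNP}, so that all judgements in $S$ are brought to proper canonical form; the same canonicalisation is applied to the consequent $\gamma$. Next I would run the nondeterministic $\lol$-elimination algorithm (using the efficient starred moves of Proposition~\ref{prop:EffLOLelimination}) along a single nondeterministically chosen branch, so that $S$ becomes a set of judgements in affine form, since $\al$ has no multiplication. The crucial point, guaranteed by the soundness of these moves together with Lemma~\ref{totality} and Proposition~\ref{prop:fProvabilityCF}, is that a model violating $\gamma$ while satisfying $S$ exists \emph{iff} on some computation branch the resulting affine system, augmented with the strict inequality coming from the negation of $\gamma$, is feasible.

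The final step reduces to linear feasibility: an affine set of judgements together with one strict affine inequality (the negation of the canonicalised consequent) is a system of linear inequalities over $\preals$, and its feasibility is decidable in polynomial time by linear programming (e.g.\ the Ellipsoid method of Khachiyan~\cite{Khachiyan1980}, with strict inequalities handled by the standard perturbation/auxiliary-variable trick). Since each starred $\lol$-elimination move adds only a constant number of formulas of constant size and removes one occurrence of $\lol$, the system produced on any branch has size at most linear in the input; hence each nondeterministic branch runs in polynomial time. This places non-consequence in NP, and therefore semantical consequence in $\al$ in co-NP.

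\textbf{Main obstacle.} I expect the delicate part to be the handling of the \emph{strict} inequality arising from negating the consequent. Whereas satisfiability only ever produces non-strict inequalities ($\geq$), non-consequence genuinely requires $\M(\phi) < \M(\psi)$, and one must argue carefully that the qualitative guessing (which letters are finite/infinite) correctly captures all the cases in which the canonicalised judgement can be violated---in particular the boundary cases where a formula evaluates to $0$ or $\infty$. Reconciling this strict condition with the canonicalisation lemma (Proposition~\ref{prop:fProvabilityCF}), which is phrased in terms of f-provable equivalence rather than exact semantic equivalence, is where the argument needs the most care; once the strict inequality is correctly threaded through, the linear-programming feasibility check is routine.
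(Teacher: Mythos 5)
Your proposal is correct and follows essentially the same route as the paper: nondeterministically fix which letters are finite, canonicalise, run the efficient starred $\lol$-elimination moves to reach an affine system, append the strict inequality from the negated consequent, and decide feasibility in polynomial time via the Ellipsoid method (the paper phrases the strict-inequality step as a reduction to infeasibility of the dual linear program, but this is the same standard device you invoke). The delicate point you flag about threading the strict inequality through canonicalisation is handled in the paper exactly as you anticipate, via Proposition~\ref{prop:fProvabilityCF} and the soundness of the moves.
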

The co-NP hardness follows by a linear-time reduction from the tautology
problem for Boolean propositional logic. Membership in co-NP follows via the 
non-deterministic procedure explained at the beginning of the section,
which reduces the problem to the infeasibility of linear programs~\cite{Khachiyan1980}.

Finally, we give complexity bounds also for the corresponding problem in $\pl$.
\begin{theorem}\label{SCpl-PSPACE}
Semantical consequence in $\pl$ is in PSPACE.
\end{theorem}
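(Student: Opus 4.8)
The plan is to mirror the co-NP procedure established for $\al$ in Theorem~\ref{thm:cosequenceAPisCoNP}, replacing the concluding linear-programming feasibility test by a decision procedure for the existential theory of the real numbers, which lies in PSPACE~\cite{Canny88}. Concretely, to decide $S \models \gamma$ I would decide its complement $S \not\models \gamma$ and then appeal to closure of PSPACE under complementation. This is again a $\gamma$-reduction in the sense of Andleman and Manders~\cite{AdlemanM79}.

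First I would run the non-deterministic reduction of Section~\ref{sec:complFin} on the pair $(S,\gamma)$, after the (Init) move has normalised the conclusion to a judgement $p \vdash q$ in polynomial (indeed affine) form. As in the proof of Theorem~\ref{thm:cosequenceAPisCoNP}, I would replace the Step~5 moves by the efficient $\lol$-elimination moves of Section~\ref{sec:SATcomplexity}, so that each branch grows only linearly and every terminal configuration stays of polynomial size. The essential difference from the $\al$ case is that in $\pl$ the propositional letters denote arbitrary polynomials, so I cannot drop the positivity bookkeeping: in Step~2 I would retain the judgements $q_i^2 p_i \vdash \un$ forcing strict positivity, because the provable equivalence of a $\pl$ judgement with its canonical form needs the full set $\FP$ (Proposition~\ref{prop:fpProvabilityCF}), not merely $\FF$ as sufficed for $\al$ (Proposition~\ref{prop:fProvabilityCF}).

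By reliability and niceness of the moves (Proposition~\ref{pro:soundnessMoves}, together with Proposition~\ref{prop:EffLOLelimination} for the efficient variants) and the Disjunction Deduction Lemma (Lemma~\ref{totality}), each branch terminates in a pair $(S_i, \gamma_i)$ of polynomial judgements with $S \not\models \gamma$ iff $S_i \not\models_f \gamma_i$ for some $i$. Exactly as in Section~\ref{sec:completenessPoly}, I would identify the finite models of $S_i$ with the non-negative solutions of the polynomial system $\sem{\theta_{ik}} \geq \sem{\vartheta_{ik}}$, so that $S_i \not\models_f \gamma_i$, for $\gamma_i = \phi_i \vdash \psi_i$, is the feasibility of that system augmented with the strict inequality $\sem{\phi_i} < \sem{\psi_i}$. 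This conjunction of polynomial (in)equalities is an instance of the existential theory of the reals.

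Putting the pieces together, $S \not\models \gamma$ amounts to guessing a branch $i$ (a polynomially long non-deterministic guess yielding a polynomially sized system) and testing feasibility of the associated existential real sentence, a PSPACE computation~\cite{Canny88}. Deterministically enumerating the at-most-exponentially-many branches in polynomial space, and running the PSPACE test on each, places $S \not\models \gamma$ in PSPACE; hence $S \models \gamma$ is in PSPACE by closure under complement. The step I expect to be the main obstacle is precisely this positivity bookkeeping: I must check that retaining the judgements $q_i^2 p_i \vdash \un$ throughout keeps canonicalisation provably reversible in $\pl$, and that the non-negativity constraints of the final polynomial system faithfully capture finite-model semantic consequence, so that the equivalence $S \not\models \gamma$ iff $\exists i.\, S_i \not\models_f \gamma_i$ is sound.
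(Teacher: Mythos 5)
Your proposal follows essentially the same route as the paper: run the completeness reduction of Section~\ref{sec:complFin} with the efficient $\lol$-elimination moves from Section~\ref{sec:SATcomplexity}, reduce each terminal branch to feasibility of a polynomial system in the existential theory of the reals (PSPACE by~\cite{Canny88}), and absorb the nondeterminism and complementation into PSPACE. You are in fact more explicit than the paper about the complementation step and the positivity bookkeeping via $q_i^2 p_i \vdash \un$, but the argument is the same.
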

Membership in PSPACE follows via the non-deterministic procedure explained 
at the beginning of the section, which reduces the problem to that of deciding the 
satisfiability of a formula in the existential theory of the reals~\cite{Canny88}.

\section{Conclusions}
\label{sec:concl}

This paper develops and studies Polynomial Lawvere logic ($\pl$), a logic based on the Lawvere quantale extended with multiplication, whose monoidal structure interacts with that on the tensor as a semiring. This logic is well-suited for encoding quantitative reasoning principles naturally, as demonstrated in Section~\ref{sec:applications}. 

We propose a deduction system for $\pl$ and show that, while the logic is generally incomplete, finitely axiomatized theories are complete. The core of the completeness proof draws on results from algebraic geometry, specifically the Krivine-Stengle Positivstellensatz. 
The use of such results in the completeness proof provides compelling evidence of the deep connection between arithmetic and logical reasoning.

Additionally, we present new complexity results for both $\pl$ and its affine fragment ($\al$). 
We demonstrate that the satisfiability of a finite set of judgements is NP-complete in $\al$ and in PSPACE for $\pl$; and that deciding the semantical consequence from a finite set of judgements is co-NP complete in $\al$ and in PSPACE for $\pl$.

Finally, building on the Weierstrass approximation theorem, which states that continuous real-valued functions on compact subsets can be approximated arbitrarily well by polynomials, one might consider developing an approximation theory grounded in $\pl$. This direction
is left for future work.

%
%
%
\bibliographystyle{splncs04}
\bibliography{main}

\newpage

\begin{appendix}

\section{Appendix} \label{appendix}
Below, we provide detailed proofs of the results stated or implicitly used in the main text, which were omitted for brevity.
	
	\subsection{The proofs of the main results}
	
We start this section by stating a useful lemma that we will use extensively in the proofs below.
\begin{lemma}[Disjunction Deduction Lemma] \label{totality}
Let $\gamma$ be a judgement, $S$ a finite set of judgements and 
$\phi$, $\psi$ formulas in $\pl$. Then,
\begin{align*}
\Big( \;
\dfrac{}{\vdash \phi \lor \psi} \,, \;
\dfrac{S \quad \vdash\phi}{\gamma}
\text{ and }
\dfrac{S \quad \vdash\psi}{\gamma}
\; \Big)
&&\text{ implies }&&
\dfrac{~S~}{\gamma} \,.
\end{align*}
\end{lemma}
\begin{proof}
It is an equivalent formulation of the totality lemma~\cite{Bacci2023} (Lemma 4.6).
\end{proof}

\begin{proof}[of Lemma~\ref{polid}]
Two polynomials are equal iff their monomials have same coefficients. Thus proving ${\vdash\theta\lollol\vartheta}$ from $\sem{\theta} = \sem{\vartheta}$ is done by repeatedly applying (\textsc{d1}), (\textsc{d3}) and the associativity and commutativity for $\ot$ and multiplication. For the converse, from the soundness of ${\vdash\theta\lollol\vartheta}$ follows that $\sem{\theta}|_{[0,\infty)^n} = \sem{\vartheta}|_{[0,\infty)^n}$. Since $\sem{\theta}$, $\sem{\vartheta}$ are polynomials, this implies $\sem{\theta} = \sem{\vartheta}$. 
\qed\end{proof}

\begin{proof}[of Corollary \ref{polidext}]
	Since $\sem{\theta} = \sem{\vartheta}$, by Lemma~\ref{polid}, $\vdash\phi\lollol\psi$ is
	provable in $\pl$. Then, from this and the two hypothesis 
	\begin{align*}
		\frac{S}{\phi\lollol\psi \vdash \theta\lollol\vartheta} 
		&&\text{and}&&
		\frac{S}{\theta\lollol\vartheta \vdash \phi\lollol\psi} 
	\end{align*}
	we get $\displaystyle\frac{S}{\vdash\phi \lollol \psi}$.
	\qed\end{proof}	
	
\begin{proof}[of Proposition \ref{prop:cfOFcf}]
	Provability of $\phi^{\sf cf} \vdash \phi$ follows by Propositions~\ref{prop:cfSemantic} and \ref{prop:cfOFcf}. Provability of $\phi^{\sf cf} \vdash \phi$ is shown by induction on the 
	formula by applying monotonicity of logical connectives. The validity of both judgements is a consequence of Theorem~\ref{soundness}.
	The second part is proven by by an easy induction on the structure of formulas in PCF with the base case given by definition: $\bot = \bot^{\sf cf}$.
	\qed\end{proof}


\begin{proof}[of Theorem \ref{thm:completeness}]
We present here how the statement of Theorem \ref{thm:completeness} can be reduced to the polynomial version presented in the paper.
\end{proof}

\begin{proof}[of Proposition~\ref{pro:soundnessMoves}]
Reliability is easy to check in each case of the reduction, as the moves were designed
exactly to preserve this invariant. 
As for niceness, we consider each step separately. Step 1 follows by substitution as if 
we assume that the following is provable 
$$\dfrac{S \quad p \vdash \phi \quad q \vdash \psi}{p \vdash q} \,,$$ so is any substitution instance of it. Indeed, $\dfrac{S}{\phi \vdash \psi}$ is obtainded by substituting $\phi$ for $p$ and $\psi$ for $q$. Step 2 follows by Lemma~\ref{totality}. Step 3 follows by the fact that after Step~2, $S$ contains judgements of the form $q_i^2p_i \vdash 1$ for each propositional variable $p_i$ in $S$. Since it is possible to deduce $\vdash p_i > 0$ from $q_i^2p_i \vdash 1$ the result follows by Proposition~\ref{prop:fpProvabilityCF}. 
Step~4 is clear. Step 5 follows by the rule ($\ot_2$), ($\lol_1$), ($\ot_1$) and from Lemma~\ref{totality} by exploiting the dichotomies (\textsc{wem}) and (\textsc{tot}).
\qed\end{proof}


\begin{proof}[detailed proof of Theorem~\ref{thm:completenessPoly}]
Let $\gamma = \theta \vdash \vartheta$ be a polynomial judgement and 
$S = \{ \theta_1 \vdash \vartheta_1, \dots, \theta_n \vdash \vartheta_n \}$ a finite set of
polynomial judgements, all over the variables $p_1, \dots, p_m$. Assume that $S \models_{\preals} \gamma$.

$\sem{\theta}$, $\sem{\vartheta}$, $\sem{\theta_i}$, $\sem{\vartheta_i}$, 
$\sem{p_j}$ are polynomials in $\reals[X_1, \dots, X_m]$, where $X_j$ is the polynomial variable corresponding to $p_j$. 
The $\preals$-valued models of $S$ are the solutions of the following 
system of polynomial inequalities 
\begin{align}
	\label{app:Sys0}
	\left\{\begin{aligned}
		\sem{\theta_i} - \sem{\vartheta_i} &\geq 0 &\text{(for $i = 1,\dots, n$)}\\
		\sem{p_j} &\geq 0 &\text{(for $j = 1,\dots, m$)}
	\end{aligned}\right.
\end{align}
The hypothesis $S \models_{\preals} \gamma$ guarantees that all the solutions 
of the system above satisfy the inequality $\sem{\theta}-\sem{\vartheta} \geq 0$.
Let $\Sigma[X_1,\dots, X_m]$ denote the sum-of-squares polynomials in the 
variables $X_1, \dots, X_m$. By the Krivine-Stengle Positivstellensatz, there exist polynomials $h_1, h_2$ each of type 
\begin{equation*}
	\sum_{\alpha\in\{0,1\}^{n+m}} \sigma_\alpha 
	\Big(\prod_{i = 1}^n (\sem{\theta_i} - \sem{\vartheta_i})^{\alpha_i} \Big) 
	\Big(\prod_{j = 1}^m \sem{p_j}^{\alpha_{n+j}} \Big)
\end{equation*}
for some $\sigma_\alpha\in\Sigma[X_1,\dots,X_m]$ 
and integer $s$ such that
\begin{equation}
	h_1\sem{\theta}=h_1 \sem{\vartheta}+(\sem{\theta}- \sem{\vartheta})^{2s}+h_2 \,.
	\label{app:Null0}
\end{equation}

\begin{description}[fullwidth]
	\item[Case 1: $s \geq 0$.]
	Our goal is to find formulas $\rho_1$, $\rho_2$ such that:
	\begin{equation}
		\label{app:eq01}
		\frac{S \quad \FF}{
			\vdash \rho_1\theta \lollol \rho_1\vartheta \ot (\vartheta \lol \theta)^{2s} \ot \rho_2} \,.
	\end{equation}
	Equation~\eqref{app:Null0} suggests that $\rho_1$, $\rho_2$ are the formulas in $\pl$
	(not necessary in polynomial form!) corresponding to $h_1$, $h_2$, respectively,
	in the sense that $\M(\rho_i) = h_i(\M(p_1), \dots, \M(p_n))$, for all $\preals$-valued models $\M$.
	Next, we show how to obtain such formulas.
	
	Identify in each side of equation~\eqref{app:Null0} all the occurrences of $\sem{\theta} - \sem{\vartheta}$ and $\sem{\theta_i} - \sem{\vartheta_i}$ with $\vartheta \lol \theta$ and $\vartheta_i \lol \theta_i$, respectively, and all occurrences of $\sem{p_j}$ with $p_j$. Then, replace $+$ by $\ot$ and product by multiplication between formulas. 
	After applying this syntactical rewriting process we obtain two
	formulas:
	\begin{align*}
		\rho_1\theta 
		&&\text{and}&& 
		\rho_1\vartheta \ot (\vartheta \lol \theta)^{2s} \ot \rho_2 \,.
	\end{align*}
	corresponding to the left- and right-hand side of \eqref{app:Null0}, respectively.
	
	Transform equation~\eqref{app:Null0} in an equality between 
	two polynomials that contains no occurrences of $-$ (subtraction), say $h_3 = h_4$.
	The same calculation can be reproduced in $\pl$ by repeatedly applying the following  inference valid in $\pl$
	\begin{equation*}
		\infrule{\phi_1\vdash\phi_2 & \FF}{ \vdash 
			\big( \psi_1\lollol\psi_2\ot\psi_3(\phi_2\lol\phi_1) \big) 
			\lollol 
			\big( \psi_1\ot\psi_3\phi_2\lollol\psi_2\ot\psi_3\phi_1 \big)
		}
	\end{equation*}
	where the r{\^o}le of $\phi_1 \vdash \phi_2$ will be taken either by $\theta \vdash \vartheta$, 
	or $\theta_i \vdash \psi_i$ ($i=1,\dots,n$), which all belong in $S$. This allows us to eliminate the occurrences of $\lol$. 
	
	We get two polynomial formulas, $\rho_3$, $\rho_4$ s.t. 
	$\sem{\rho_3} = h_3$ and $\sem{\rho_4} = h_4$. Moreover,
	\begin{gather*}
		\frac{S \quad \FF}{\rho_1\theta \lollol \rho_1\vartheta \ot (\vartheta \lol \theta)^{2s} \ot \rho_2 
			\vdash \rho_3 \lollol \rho_4} 
		\intertext{and}
		\frac{S \quad \FF}{\rho_3 \lollol \rho_4 \vdash \rho_1\theta \lollol \rho_1\vartheta \ot (\vartheta \lol \theta)^{2s} \ot \rho_2 } 
	\end{gather*}
Corollary~\ref{polidext} gives us \eqref{app:eq01} as required.
	
	\begin{description}[fullwidth]
		\item[Subcase 1.1: $s > 0$.] 
		From \eqref{app:eq01}, we get 
		\begin{equation*}
			\frac{S \quad \FF}{ \rho_1\theta\vdash\rho_1\vartheta\ot(\vartheta\lol\theta)^{2s}\ot\rho_2} \,,
		\end{equation*}
		implying $\dfrac{S \quad \FF}{\rho_1\theta\vdash\rho_1\vartheta}$ and eventually 
		$\dfrac{S \quad \FF}{\vdash\rho_1(\theta\lol\vartheta)}$, where from we get
		\begin{equation}\label{eq02}
			\frac{S \quad \FF}{\vdash\rho_1\lor(\theta\lol\vartheta)}
		\end{equation}
		We aim to prove $\dfrac{S \quad \FF}{\vdash\theta\lol\vartheta}$ using Lemma~\ref{totality} 
		and for this, we will need to prove 
		\begin{align*}
			\textsc{(a)} \;\; \frac{S \quad \FF  \quad \rho_1\vdash\theta\lol\vartheta}{\vdash\theta\lol\vartheta} 
			&& \text{ and }&&
			\textsc{(b)} \;\; \frac{S \quad \FF \quad \theta\lol\vartheta\vdash\rho_1}{\vdash\theta\lol\vartheta} \,.
		\end{align*}
		\textsc{(A)} follows directly from~\eqref{eq02}. 
		For \textsc{(b)}, note that from~\eqref{eq02}, we get
		$\dfrac{S \quad \FF \quad \theta\lol\vartheta\vdash\rho_1}{\vdash \rho_1}$
		and applying this in \eqref{app:eq01}, we obtain  
		$\dfrac{S \quad \FF \quad \theta\lol\vartheta\vdash\rho_1}{\vdash(\vartheta\lol\theta)^{2s}\ot \rho_2}$ 
		and then $\dfrac{S \quad \FF  \quad \theta\lol\vartheta\vdash\rho_1}{\vdash(\vartheta\lol\theta)^{2s}}$. 
		
		Because $s>0$, this gives us
		$\dfrac{S \quad \FF  \quad \theta\lol\vartheta\vdash\rho_1}{\vartheta\vdash\theta}$.
		Using the soundness we obtain that $S \cup \{ \theta\lol\vartheta \vdash \rho_1\} \models_{\preals} \vartheta\vdash\theta$. By hypothesis, $S \models_{\preals} \theta \vdash \vartheta$, hence 
		$$S \cup \{ \theta\lol\vartheta \vdash \rho_1 \} \models_{\preals} \; \vdash \vartheta \lollol \theta \,.$$
		By identifying the finitary models of $S \cup \{ \theta \lol \vartheta \}$ with the solutions
		of the following system of polynomial inequalities
		\begin{align*}
			\left\{\begin{aligned}
				\sem{\theta_i} - \sem{\vartheta_i} &\geq 0 &\text{(for $i = 1,\dots, n$)}\\
				\sem{p_j} &\geq 0 &\text{(for $j = 1,\dots, m$)}\\
				(\sem{\vartheta} - \sem{\theta}) - h_1 &\geq 0
			\end{aligned}\right.
		\end{align*}
		$S \cup \{ \theta\lol\vartheta \vdash \rho_1 \} \models_{\preals} \vdash \vartheta \lollol \theta$ 
		implies that the solutions of the system above satisfy the equality $\sem{\vartheta} - \sem{\theta} = 0$. Applying the weak Positivstellensatz we get that there exists an 
		integer $r > 0$ and one inequality $h \geq 0$ in the system above so that 
		$h + (\sem{\vartheta} - \sem{\theta})^{2r} = 0$.
		Assuming $\rho$ represents $h$, this fact translates into
		\begin{equation*}
			\frac{S \quad \FF \quad \theta\lol\vartheta \vdash \rho_1}{\rho \ot (\theta \lol \vartheta )^{2r}}
		\end{equation*}
		which implies \textsc{(B)} as required.
		
		\item[Subcase 1.2: $s = 0$.]
		The instantiation of \eqref{app:eq01} for $s = 0$, implies that 
		\begin{equation}
			\label{eq05}
			\frac{S \quad \FF}{
				\rho_1\theta \vdash \rho_1\vartheta \ot 1 \ot \rho_2} \,.
		\end{equation}
		From the above we obtain $\dfrac{S \quad \FF}{\rho_1\theta \vdash \rho_1\vartheta}$, which is provably equivalent to $\dfrac{S \quad \FF}{\vdash \rho_1(\theta \lol \vartheta)}$ and further implies 
		\begin{equation}\label{eq06}
			\dfrac{S \quad \FF }{\vdash \rho_1 \lor (\theta \lol \vartheta)} \,.
		\end{equation} 
		We will again use Lemma~\ref{totality} to obtain 
		the desired result. For this, it is sufficient to prove the inferences:
		\begin{align*}
			\textsc{(a)} \;\; \frac{S \quad \FF \quad \rho_1\vdash\theta\lol\vartheta}{\vdash\theta\lol\vartheta} 
			&& \text{ and }&&
			\textsc{(b)} \;\; \frac{S \quad \FF \quad \theta\lol\vartheta\vdash\rho_1}{\vdash\theta\lol\vartheta} \,.
		\end{align*}
		\textsc{(A)} follows directly from~\eqref{eq02}. As for \textsc{(B)}, from \eqref{eq06}, we
		get $\dfrac{S  \quad \FF  \quad \theta\lol\vartheta\vdash\rho_1}{\vdash \rho_1}$, which applied in \eqref{eq05} gives $\dfrac{S  \quad \FF  \quad \theta\lol\vartheta\vdash\rho_1}{\vdash 1}$, 
		implying $\dfrac{S  \quad \FF  \quad \theta\lol\vartheta\vdash\rho_1}{\vdash \bot}$ and next \textsc{(b)}.

	\end{description}
	
	\item[Case 2: $s = -r < 0$ with $r>0$.]
	We have $\sem{\theta}- \sem{\vartheta} \neq 0$ (otherwise, $(\sem{\theta}- \sem{\vartheta})^{2s}$ is not well defined in equation \eqref{app:Null0}) and \eqref{app:Null0} is equivalent to
	\begin{equation}
		h_1(\sem{\theta}- \sem{\vartheta})^{2r+1}=1 + h_2(\sem{\theta} - \sem{\vartheta})^{2r} \,.
	\end{equation}
	From $S \models_{\preals} \gamma$ and by knowing that $\sem{\theta} - \sem{\vartheta} \neq 0$ 
	over all of the reals, we get that all the solutions of the system of polynomial inequalities \eqref{app:Sys0} necessarily satisfy the \emph{strict} inequality $\sem{\theta} - \sem{\vartheta} > 0$. Then, by the Krivine-Stengle Positivstellensatz, there exist polynomials $h_5, h_6$ each of type 
	\begin{equation*}
		\sum_{\alpha\in\{0,1\}^{n+m}} \sigma_\alpha 
		\Big(\prod_{i = 1}^n (\sem{\theta_i} - \sem{\vartheta_i})^{\alpha_i} \Big) 
		\Big(\prod_{j = 1}^m \sem{p_j}^{\alpha_{n+j}} \Big)
	\end{equation*}
	for some $\sigma_\alpha\in\Sigma[X_1,\dots,X_m]$ such that
	\begin{equation}
		h_5\sem{\theta}=h_5 \sem{\vartheta}+1+h_6 \,.
		\label{Null2}
	\end{equation}
	As before, there will be two formulas $\rho_5$, $\rho_6$ corresponding to the
	the polynomials $h_5$, $h_6$, respectively. This translate in $\pl$ as
	\begin{equation*}
		\frac{S  \quad \FF }{
			\rho_5\theta \vdash \rho_1\vartheta \ot 1 \ot \rho_6} \,.
	\end{equation*}
	From here, the thesis follows as in subcase 1.2.
	\qed
\end{description}

\end{proof}


	
\begin{proof}[of Theorem \ref{thm:SatALNP-compl}]
	We first prove the membership in NP.
	
	Let $S$ be a finite set of judgements in $\al$ over the variables 
	$p_1, \dots, p_n$. A non-deterministic procedure for checking the satisfiability of $S$
	is as follows:
	\begin{enumerate}
		\item guess a subset of the propositional variables;
		\item replace every other variable by $\bot$ in the judgements of $S$;
		\item canonicalise the judgements;
		\begin{itemize}
			\item If there is a finitely unsatisfiable judgement of the form $\theta \vdash \bot$, 
			terminate negatively;
			\item Otherwise, remove all the trivially satisfiable judgements of the form
			$\bot \vdash \theta$. 
		\end{itemize}
		\noindent
		At this point, all the judgments left are in proper canonical form.
		\item Proceed with the non-deterministic algorithm for the elimination of $\lol$.
		At the end of this procedure we are left with a set of judgements in affine form;
		\item Finite satisfiability for affine set of judgements is determined by 
		a polynomial-time reduction to the feasibility of linear programs.
	\end{enumerate}
	As the starting set of judgements is finite, termination of the non-deterministic 
	procedure is evident. 
	
	Correctness follows by Lemma~\ref{totality} and
	Proposition~\ref{prop:EffLOLelimination} and the fact that
	all judgements in $\al$ are $\preals$-provably equivalent to their canonical forms, \ie, for all formulas $\phi$,$\psi$ in $\al$
	\begin{align*}
		\frac{\phi \vdash \psi \quad \FF}{\phi^{\sf cf} \vdash \psi^{\sf cf}}
		&& \text{and} &&
		\frac{\phi^{\sf cf} \vdash \psi^{\sf cf} \quad \FF}{\phi \vdash \psi} \,.
	\end{align*}
	(The proof of this last claim is similar to that of Proposition~\ref{prop:fpProvabilityCF}).
	
	As for the complexity, Steps 2--3 all take linear time, as well as each non-deterministic computation branch in Step 4. Step 5 can be
	computed in polynomial-time in the size%
	\footnote{The size of a formula is the sum of the number of logical connectives and propositional variables used plus
		the total number of bits in a binary representation of the coefficients used in scalar 
		multiplications. The size of a set of judgements the overall sum of the sizes of the 
		formulas in it.} of the affine set of judgements obtained after Step 6, by using the 
	Ellipsoid method by Khachiyan~\cite{Khachiyan1980}.
	
	Since each move for the elimination of $\lol$ adds only a constant number of formulas, each of constant size, the size of the set of judgements after Step 6 increased at most linearly 
	in the size of the input. Thus, each non-deterministic computation branch takes at most polynomial time
	in the size of the input.
	
	NP-hardness follows via a reduction from SAT. Let $\varphi$ be a Boolean
	formula with propositional variables in $\Prop$ built using only connectives $\lnot$, $\land$, and $\lor$. Define $\varphi^{\lnot\lnot}$ as the formula in $\al$ obtained by adding double negation ($\lnot\lnot$) before any propositional variable in $\varphi$.
	Then, $\varphi$ is Boolean-satisfiable if and only if $\vdash \varphi^{\lnot\lnot}$ is $\al$-satisfiable (as $\models_\M \lnot\lnot p$ iff $\M(p) < \infty$, for any $p \in \Prop$). 
	\qed\end{proof}	
	
	\begin{proof}[of Theorem \ref{SATpl-PSPACE}]
		Let $S$ be a finite set of judgements in $\pl$ over the propositional variables 
		$p_1, \dots, p_n$. We give a $\gamma$-reduction, in the sense of Andleman and Manders~\cite{AdlemanM79},
		to the existential theory of the
		real numbers, which can be decided in PSPACE~\cite{Canny88}. The non-deterministic reduction procedure is described in steps:
		
		The Steps 1, 2 and 3 are identical to the ones in the proof of 
		Theorem \ref{thm:SatALNP-compl}.
		
		Step 4: Let $q_1, \dots, q_n$ be fresh propositional variables. 
		Add the proper canonical judgments $p_iq_i^2 \vdash 1$ (for $i = 1, \dots, n$).
		
		Step 5: Proceed with the non-deterministic elimination of $\lol$.
		At the end of this procedure we are left with a set of judgements in polynomial form.
		
		Finite satisfiability for polynomial set of judgements is equivalent to feasibility of a set of inequalities between polynomials which is expressible in the existential theory of the real numbers.
		
		Termination of the algorithm is evident. Correctness of the reduction follows by Lemma~\ref{totality} and \ref{prop:EffLOLelimination}. 
		As for the complexity, Steps 2--3 all take linear time, as well as each non-deterministic computation branch in Step 4. So the proposed is in NP. 
		The thesis follows as $\text{NP} \subseteq \text{PSPACE}$.
		\qed\end{proof}
		
		\begin{proof}[of Theorem \ref{thm:cosequenceAPisCoNP}]
			Let $S$ be a finite set of affine judgements and $\gamma = \phi \vdash \psi$ an affine judgement, all using only the variables $p_1, \dots, p_n$.
			A non-deterministic procedure for checking
			$S \not\models \gamma$ follows the steps of the non-deterministic reduction proposed 
			for the proof of finite completeness above.
			As observed previously, we need to pay attention to the efficiency of the moves.
			Thus, as done for SAT complexity, we replace the moves in Step 5 
			(reduction to polynomial form) with the set of moves for the efficient elimination of $\lol$.
			Note also that, in Step 3 (reduction to CF), there is no need to introduce $\FF$ in the
			hypotheses, as long as we remember to check for finite unsatisfiability from that 
			point onward. Similarly, in Step 2 (choice of domain) we don't need to add judgements 
			of type $q^2p \vdash 1$, as 
			all judgements in $\al$ are $\preals$-provably equivalent to their canonical forms (the proof is similar to that of Proposition~\ref{prop:fpProvabilityCF}).
			
			Thus, at the end of the non-deterministic procedure, we are left with sets 
			$S_i = \{ \theta_{i1} \vdash \vartheta_{i1}, \dots, \theta_{in} \vdash \vartheta_{in} \}$ of 
			affine judgements with corresponding conclusions $\gamma_i= \phi_i \vdash \psi_i$, 
			which is also affine, such that
			$S \not\models \gamma$ iff  $S_i \not\models_{\preals} \gamma_i$, for some $i$.
			
			Determining $S_i \not\models_{\preals} \gamma_i$ is equivalent to the
			feasibility of the following system of linear inequalities 
			\begin{align*}
				\left\{\begin{aligned}
					\sem{\theta_{ik}} &\geq \sem{\vartheta_{ik}} &\text{(for $k = 1,\dots, n$)}\\
					\sem{p_j} &\geq 0 &\text{(for $j = 1,\dots, m$)}\\
					\sem{\phi_i} &< \sem{\psi_i} 
				\end{aligned}\right.
			\end{align*}
			which can be checked via a polynomial-time reduction to the infeasibility of linear programs 
			(in fact, it corresponding dual linear program) which can be done in polynomial-time by employing the Ellipsoid method by Khachiyan~\cite{Khachiyan1980}.
			
			From similar considerations done in the proof of Theorem~\ref{thm:SatALNP-compl} , 
			we can show that each non-deterministic computation branch takes at most polynomial time
			in the size of the input. Thus the problem of semantical consequence is in co-NP.
			
			Since we can encode Boolean propositional logic in $\al$, the hardness
			follows by a linear-time reduction from the tautology problem for Boolean propositional logic.
			\qed\end{proof}
		
		\begin{proof}[of Theorem \ref{SCpl-PSPACE}]
			Let $S$ be a finite set of judgements and $\gamma$ a judgement over the propositional variables $p_1, \dots, p_n$. The reduction proposed in the finite completeness proof ---but modified by replacing the moves of elimination of $\lol$ with the efficient ones from SAT complexity proof --- is a $\gamma$-reduction in the sense of Andleman and Manders~\cite{AdlemanM79}. It reduces the problem of semantical consequence in $\pl$ 
			to the satisfiability of a formula of the language of the existential theory of 
			the real numbers (details are in the proof of completeness).
			Termination of the non-deterministic algorithm is evident. Correctness follows from Propositions~\ref{pro:soundnessMoves} and~\ref{prop:EffLOLelimination}.
			
			As for the complexity, all the steps of the reduction take linear time. 
			So the proposed procedure is in NP. The thesis follows as satisfiability of a formula 
			in the existential theory of the reals is in PSPACE and $\text{NP} \subseteq \text{PSPACE}$.
			\qed\end{proof}

	
	\subsection{Useful lemmas of $\pl$} \label{app:useful}
	
	\begin{lemma}\label{l1}
        The following are derivable proof rules in $\pl$
		\begin{align*}
			\begin{array}{c}
				\begin{aligned}
					\infrule[1.]{}{\phi\land\psi\vdash\phi}
					&&&
					\infrule[2.]{}{\phi\land\psi\vdash\psi}
					&&&
					\infrule[3.]{\vdash\phi && \vdash\psi}{\vdash\phi\land\psi}
					\\[2ex]
					\infrule[4.]{\phi\vdash\psi}{\vdash\phi\lollol\phi\land\psi}
					&&&
					\infrule[5.]{\psi\vdash\phi}{\vdash\psi\lollol\phi\land\psi}
					&&&
					\infrule[6.]{\phi\vdash\psi}{\rho\lol\phi\vdash\rho\lol\psi}
					\\[2ex]
					\infrule[7.]{}{\phi\vdash\phi\lor\psi}
					&&&
					\infrule[8.]{}{\psi\vdash\phi\lor\psi}
					&&&
					\infrule[9.]{\phi\vdash\psi}{\vdash\phi\lor\psi\lollol\psi}		\end{aligned}
			\end{array}
		\end{align*}
		
	\end{lemma}
	
	\begin{proof} We prove each derivation separately:
 
		1. $$\infrule[def $\land$]{\infrule[weak,$\ot_1$]{\infrule[id]{}{\vdash\phi\lol\phi}}{\phi\ot(\phi\lol\psi)\vdash\phi}}{\phi\land\psi\vdash\phi}$$
		
		2. $$\infrule[def $\land$]{\infrule[$\ot_2$]{\infrule[id]{}{\phi\lol\psi\vdash\phi\lol\psi}}{\phi\ot(\phi\lol\psi)\vdash\psi}}{\phi\land\psi\vdash\psi}$$
		
		3. $$\infrule[def $\land$]{\infrule{\vdash\phi && \infrule[$\ot_2$]{\infrule[weak]{\vdash\psi}{\phi\vdash\psi}}{\vdash\phi\lol\psi}}{\vdash\phi\ot(\phi\lol\psi)}}{\vdash\phi\land\psi}$$
		
		4. $$\infrule[Lemma\ref{l1}.3]{\infrule[$\ot_2$]{\infrule{\infrule[$\ot_2$]{\phi\vdash\psi}{\vdash\phi\lol\psi}}{\phi\vdash\phi\ot(\phi\lol\psi)}}{\vdash\phi\lol(\phi\ot(\phi\lol\psi))}&&
			\infrule[$\ot_2$]{\infrule{\infrule[top]{}{\phi\lol\psi\vdash\top}}{\phi\ot(\phi\lol\psi)\vdash\phi}}{\vdash(\phi\ot(\phi\lol\psi))\lol\phi}}{\infrule[def $\land$]{\vdash\phi\lollol(\phi\ot(\phi\lol\psi))}{\vdash\phi\lollol\phi\land\psi}}$$
		
		5. $$\infrule[def $\land$]{\infrule[$\lol_1$]{\infrule[id]{}{\phi\lol\psi\vdash\phi\lol\psi} && \psi\vdash\phi}{\psi\vdash\phi\ot(\phi\lol\psi)} &&
			\infrule[$\ot_2$]{\infrule[id]{}{\phi\lol\psi\vdash\phi\lol\psi}}{\phi\ot(\phi\lol\psi)\vdash\psi} }{\vdash\psi\lollol(\phi\ot(\phi\lol\psi))}$$
		
		6. $$\infrule[$\ot_2$]{\infrule[def $\land$]{\infrule[cut]{\phi\vdash\psi && \infrule[Lemma \ref{l1}.1]{}{\rho\land\phi\vdash\phi}}{\rho\land\phi\vdash\psi}}{\rho\ot(\rho\lol\phi)\vdash\psi}}{\rho\lol\phi\vdash\rho\lol\psi}$$
		
		7. $$\infrule[def $\lor$]{\infrule[$\land_2$]{\infrule[$\ot_2$]{\infrule[id, weak, $\ot_1$]{}{\phi\ot(\psi\lol\phi)\vdash\phi}}{\phi\vdash(\psi\lol\phi)\lol\phi}&&
				\infrule[$\ot_2$]{\infrule[def $\land$]{\infrule[Lemma \ref{l1}.2]{}{\phi\land\psi\vdash\psi}}{\phi\ot(\phi\lol\psi)\vdash\psi}}{\phi\vdash(\phi\lol\psi)\lol\psi}
			}{\phi\vdash((\phi\lol\psi)\lol\psi)\land((\psi\lol\phi)\lol\phi)}}{\phi\vdash\phi\lor\psi}$$
		
		8. Similar to 7.
		
		9. $$\infrule{\infrule[Lemma \ref{l1}.9]{}{\psi\vdash\phi\lor\psi} &&
			\infrule[$\land_1$, def $\lor$]{\infrule{\infrule{\phi\vdash\psi}{\vdash\phi\lol\psi}}{(\phi\lol\psi)\lol\psi\vdash\psi}}{\phi\lor\psi\vdash\psi}}{\vdash \phi\lor\psi\lollol\psi}$$
	\end{proof}
	
	\begin{lemma}[Rules for conjunction and disjunction]\label{disjconj}
    The following are derivable proof rules in $\pl$
		\begin{align*}
			\begin{array}{c}
				\begin{aligned}
					\infrule[$\land_1$]{\Gamma, \phi \vdash \theta}{\Gamma, \phi \land \psi \vdash \theta}
					&&&
					\infrule[$\land_2$]{\Gamma \vdash \phi	& \Gamma \vdash \psi}{\Gamma \vdash \phi \land \psi}
					&&&
					\infrule[$\land_3$]{\Gamma \vdash \phi \land \psi}{\Gamma \vdash \psi}
					\\[2ex]
					\infrule[$\lor_1$]{\Gamma, \phi \vdash \theta & \Gamma, \psi \vdash \theta}{\Gamma, \phi \lor \psi \vdash \theta}
					&&&
					\infrule[$\lor_2$]{\Gamma \vdash \phi}{\Gamma \vdash \phi \lor \psi}
					&&&
					\infrule[$\lor_3$]{\Gamma, \phi \lor \psi \vdash \theta}{\Gamma, \psi \vdash \theta}
				\end{aligned}
			\end{array}
		\end{align*}
	\end{lemma}
	
	\begin{proof} We prove each item separately.
 
		$\land_1$:
		$$\infrule[cut]{\Gamma,\phi\vdash\theta
			&&\infrule[Lemma \ref{l1}.1]{}{\phi\land\psi\vdash\phi}&}
		{\Gamma,\phi\land\psi\vdash\theta}$$
		
		$\land_2$: Assume $\Gamma = \gamma_1, \dots, \gamma_n$ and let $\gamma = \bigoplus_i \gamma_i$. We have
		$$\infrule[$\lol_1$]{\infrule[Lemma \ref{l1}.6]{\infrule[$\ot_1$]{\Gamma\vdash\psi}{\gamma\vdash\psi}}{\phi\lol\gamma\vdash\phi\lol\psi} && \infrule[$\ot_1$]{\Gamma\vdash\phi}{\gamma\vdash\phi}}{\infrule[def $\land$]{\gamma\vdash\phi\ot(\phi\lol\psi)}{\infrule[$\ot_1$]{\gamma\vdash\phi\land\psi}{\Gamma\vdash\phi\land\psi}}}$$
		
		$\land_3$:
		$$\infrule[cut]{\Gamma\vdash\phi\land\psi && \infrule[Lemma \ref{l1}.2]{}{\phi\land\psi\vdash\psi}}{\Gamma\vdash\psi}$$
		
		$\lor_1$: trivial, using Lemma~\ref{totality}.
		
		$\lor_2$:
		$$\infrule[cut]{\Gamma \vdash \phi && \infrule[Lemma \ref{l1}.7]{}{\phi\vdash\phi\lor\psi}}{\Gamma\vdash\phi\lor\psi}$$
		
		$\lor_3$:
		$$\infrule[cut]{\Gamma, \phi\lor\psi\vdash\theta && \infrule[Lemma \ref{l1}.8]{}{\psi\vdash\phi\lor\psi}}{\Gamma,\phi\vdash\theta}$$
	\end{proof}
	
	\begin{lemma}\label{distrib}
        The following are derivable proof rules in $\pl$
		\begin{align*}
			\begin{array}{c}
				\begin{aligned}
					\infrule[1]{}{\vdash\theta(\phi\land\psi)\lollol\theta\phi\land\theta\psi}
					&&&
					\infrule[2]{}{\vdash\theta(\phi\lor\psi)\lollol\theta\phi\lor\theta\psi}
					&&&
					\doubleinfrule[3]{\vdash \phi>0
					}{\vdash\bot\phi\lollol \bot}
				\end{aligned}
				\\
				\begin{aligned}
					\infrule[4]{}{\vdash\min\{r,s\}\phi\lollol r\phi\lor s\phi}
					&&&
					\infrule[5]{}{\vdash\max\{r,s\}\phi\lollol r\phi\land s\phi}
					&&&
					\doubleinfrule[6]{\vdash\phi\psi}{\vdash\phi\lor\psi}
				\end{aligned}
			\end{array}
		\end{align*}
	\end{lemma}
	
	\begin{proof}
		1, 2, 4, 5: trivial using tautology lemma, (\textsc{comp}) and Lemma \ref{l1}.
		
		3:
		\begin{align*} 
			\begin{array}{c}
				\begin{aligned}
					\infrule{\infrule{\infrule{\vdash\phi>0}{\vdash\phi\bot\lol\bot} && \infrule{\infrule{}{\bot\vdash\phi\bot}}{\vdash\bot\lol\phi\bot}}{\vdash(\phi\bot\lol\bot)\land(\bot\lol\phi\bot)}}{\vdash\phi\bot\lollol\bot}
					&&&&
					\infrule{\infrule{\vdash\phi\bot\lollol\bot}{\vdash\phi\bot\lol\bot}}{\vdash\phi>0}	
				\end{aligned}
			\end{array}
		\end{align*}
		
		6: using totality lemma as follows:
		\begin{align*} 
			\begin{array}{c}
				\begin{aligned}
					\infrule{\infrule[cut]{\vdash\phi\psi && \infrule[comp]{\phi\vdash\psi}{\phi\psi\vdash\psi^2}}{\infrule[nullify]{\vdash\psi^2}{\vdash\psi}} && \phi\vdash\psi}{\vdash\phi\lor\psi}
					&&&&
					\infrule{\infrule[cut]{\vdash\phi\psi && \infrule[comp]{\psi\vdash\phi}{\phi\psi\vdash\phi^2}}{\infrule[nullify]{\vdash\phi^2}{\vdash\phi}} && \psi\vdash\phi}{\vdash\phi\lor\psi}
				\end{aligned}
			\end{array}
		\end{align*}
		
		\begin{lemma}\label{order}
        The following are derivable proof rules in $\pl$
			\begin{align*}
				\begin{array}{c}
					\begin{aligned}
						\doubleinfrule[1]{\phi\vdash\psi}{\vdash(\phi>\psi)\lor(\phi=\psi)}
						&&&
						\infrule[2]{}{\vdash(\psi>\psi)\lor(\psi>\phi)\lor(\phi=\psi)}
					\end{aligned}
					\\ \\
					\begin{aligned}
						\infrule[3]{\vdash\phi>\psi&&\vdash\psi>\phi}{\vdash\bot}
						&&&
						\infrule[4]{\vdash\phi>\psi&&\vdash\psi=\phi}{\vdash\bot}
					\end{aligned}
				\end{array}
			\end{align*}
		\end{lemma}
		
	\end{proof}

\end{appendix}

\end{document}